\tikzset{every picture/.style={semithick, font=\small}}
\renewcommand{\th}{^\text{th}}
\newcommand{\modif}[2]{#2}
\DeclarePairedDelimiter{\floor}{\lfloor}{\rfloor}
\DeclarePairedDelimiter{\set}{\{}{\}}
\DeclarePairedDelimiter{\p}{(}{)}
\DeclarePairedDelimiter{\sqb}{[}{]}
\DeclarePairedDelimiter{\abs}{|}{|}
\DeclareMathOperator{\One}{\mathbf{1}}
\DeclareMathOperator{\argmax}{arg\,max}
\newcommand{\InBool}{\set{-1,1}}
\newcommand{\OutBool}{\set{0,1}}
\newcommand{\Fourier}[1]{\widehat{#1}}
\DeclareMathOperator{\DT}{DT}
\newcommand{\calS}{\mathcal{S}}
\newcommand{\calT}{\mathcal{T}}
\newcommand{\Sc}{{\bar{S}}}
\newcommand{\xSc}{{x_\Sc}}
\newcommand{\fS}{{f_{S|\xSc}}}
\newcommand{\maxd}{{Cw \log (3/\epsilon)}}
\newcommand{\uastOne}{{100Cw\log (k+2) \log (3/\epsilon)}}
\newcommand{\uastTwo}{{100\maxd}}
\DeclareMathOperator{\Enc}{Encode}
\DeclareMathOperator{\Dec}{Decode}
\DeclareMathOperator{\cover}{cover}
\DeclareMathOperator{\numCovers}{numCovers}
\renewcommand{\emph}[1]{{\sl #1}}
\begin{document}\title{Sharper bounds on the Fourier concentration of DNFs
 \vspace{15pt}}

\author{Victor Lecomte\vspace{8pt} \\ \hspace{0pt}{\sl Stanford }\vspace{8pt} \\ vlecomte@stanford.edu \and 
\and Li-Yang Tan \vspace{8pt} \\ \hspace{-8pt} {\sl Stanford}\vspace{8pt} \\ liyang@cs.stanford.edu}

\date{\vspace{15pt}\small{\today}}

\maketitle

\begin{abstract} 
In 1992 Mansour proved that every size-$s$ DNF formula is Fourier-concentrated on $s^{O(\log\log s)}$ coefficients.   We improve this to $s^{O(\log\log k)}$ where $k$ is the \emph{read number} of the DNF.  Since~$k$ is always at most~$s$, our bound matches Mansour's for all DNFs and strengthens it for small-read ones.  The previous best bound for read-$k$ DNFs was $s^{O(k^{3/2})}$.   For $k$ up to \modif{$\tilde{\Omega}(\log\log s)$}{$\tilde{\Theta}(\log\log s)$}, we further improve our bound to the optimal $\poly(s)$; previously no such bound was known for any $k = \omega_s(1)$.

Our techniques involve new connections between the term structure of a DNF, viewed as a set system, and its Fourier spectrum. 

\end{abstract} 

\thispagestyle{empty}

\newpage
\setcounter{page}{1}

\section{Introduction} 

The relationships between combinatorial and analytic measures of Boolean function complexity is the subject of much study.  A classic result of this flavor is Mansour's theorem~\cite{Man92}, which shows that every size-$s$ DNF formula is Fourier-concentrated on $s^{O(\log\log s)}$  coefficients \modif{}{(that is, it is well-approximated by a polynomial with $s^{O(\log \log s)}$ monomials)}.  More precisely: \bigskip

\noindent{{\bf Mansour's theorem.}} {\it For every size-$s$ DNF $f$ and  every $\eps $, the Fourier spectrum of $f$ is $\eps$-concentrated on $(s/\eps)^{O(\log\log(s/\eps)\log(1/\eps))}$  coefficients.  }
\bigskip

\modif{}{However, Mansour conjectured that this bound was not tight, and that the correct bound was actually polynomial in $s$.}\bigskip

\noindent\modif{}{{{\bf Mansour's conjecture.}} {\it For every size-$s$ DNF $f$ and  every $\eps $, the Fourier spectrum of $f$ is $\eps$-concentrated on $s^{O_\eps(1)}$  coefficients.  }}
\bigskip

Our main result is a sharpening of Mansour's theorem that takes the \emph{read number} of the DNF into account.  We say that a DNF is \emph{read-$k$} if every variable occurs in at most $k$ of its terms. 
\begin{theorem}
\label{thm:intro:improve} 
For every size-$s$ read-$k$ DNF $f$ and  every $\eps$, the Fourier spectrum of $f$ is $\eps$-concentrated on $(s/\eps)^{O(\log\log k \log(1/\eps))}$  coefficients. 
\end{theorem}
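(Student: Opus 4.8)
The plan is to run the random‑restriction machinery behind Mansour's theorem, but to replace the crude ``number of terms'' bookkeeping by a finer analysis of the terms \emph{as a set system}, in which the read bound $k$ plays the role that the size $s$ plays in Mansour's argument. The first step is the standard width reduction: a term of width more than $w:=O(\log(s/\eps))$ is satisfied by a uniform input with probability at most $\eps/(3s)$, so deleting all such terms changes $f$ on at most an $\eps$‑fraction of inputs and hence perturbs its Fourier spectrum by $O(\eps)$ in $L_2$; since deleting terms never increases the read number, it suffices to prove the core statement: every width‑$w$, read‑$k$ DNF $g$ is $\eps$‑concentrated on $(s/\eps)^{O(\log\log k\,\log(1/\eps))}$ coefficients. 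Substituting $w=O(\log(s/\eps))$ then yields Theorem~\ref{thm:intro:improve}, and since $\log\log k\le O(\log\log s)$ always, the claim degrades gracefully to Mansour's bound when $k$ is large; the content is entirely in the small‑$k$ regime.

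For the core statement I would combine H\aa stad's switching lemma with the zero‑error Fourier concentration of shallow decision trees, organised so that the ``bad'' coefficients are indexed by a bounded‑branching, bounded‑depth tree. Apply a random restriction $\bm{\rho}$ with $*$‑probability $\delta=\Theta(1/w)$. H\aa stad's lemma gives $\Pr[\DT(g\harpoon_{\bm{\rho}})\ge d]\le(5\delta w)^d$, and the restriction/Fourier identity $\mathbb{E}_{\bm{\rho}}[\widehat{g\harpoon_{\bm{\rho}}}(S)^2]=\sum_{T\supseteq S}\hat g(T)^2\,\delta^{|S|}(1-\delta)^{|T|-|S|}$ converts this first into degree‑$O(w\log(1/\eps))$ concentration of $g$, and then — the crucial move — into concentration of $g$ on the pullbacks of the leaf‑sets of the canonical decision trees of the restrictions. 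A leaf of the canonical decision tree of a width‑$w$ DNF is described by a short sequence of ``term events'' (for each term $T$ probed along the path: which of its $\le w$ variables were set, and to what), so the surviving coefficients of $g$ have the shape $S=\bigcup_{i\in I}S_i$ with $\emptyset\ne S_i\subseteq T_i$, where $I$ is a set of terms that becomes ``reachable'' over a few restriction rounds. Here read‑$k$ enters through the term hypergraph: each variable lies in at most $k$ terms, so probing the variables of one term activates at most $wk$ new terms, giving the ``in play'' terms a bounded‑degree structure; moreover a read‑$k$ DNF has no sunflower with a nonempty core and more than $k$ petals, so after a restriction round the in‑play sub‑DNF collapses — up to $\eps$‑error — to one whose width and read number have both dropped. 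Iterating, the number of rounds needed before the formula is trivial is controlled by how fast these parameters shrink; I would aim to show the read‑$k$ structure forces a doubly‑exponential‑in‑rounds decay, so that $O(\log\log k)$ rounds suffice, each contributing a $2^{O(w\log(1/\eps))}$ factor to the count of relevant coefficients, with the remaining $(s/\eps)^{O(\cdot)}$ slack absorbing the width reduction, the degree truncation, and the $k=O(1)$ base case.

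The main obstacle, and where I expect the real work to lie, is making this ``collapse'' precise: showing that one round of random restriction sends a width‑$w$ read‑$k$ DNF, up to small $L_2$‑error, to a DNF whose width \emph{and} read number are simultaneously and substantially smaller — ideally with the read number dropping like a square root or a logarithm, so that $O(\log\log k)$ rounds are enough — while still keeping the set of surviving Fourier coefficients of bounded size. This requires coupling two incompatible‑looking views of the same $\bm{\rho}$: the switching lemma, which controls the \emph{depth} of the restricted formula but says nothing about \emph{which} terms survive, and a sunflower/bounded‑degree analysis of the term set system, which controls the combinatorics of the surviving terms but is blind to the random values. A secondary difficulty is the $\eps$‑accounting: the errors from the $O(\log\log k)$ rounds, the width reduction, and the degree truncation must sum to $O(\eps)$ while costing only a $\log(1/\eps)$ factor in the exponent, which I would arrange by running the recursion against a geometrically decreasing error budget.
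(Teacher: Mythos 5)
Your first paragraph (width reduction to $w=O(\log(s/\eps))$, read does not increase, reduce to a width-$w$ statement) is exactly how the paper starts, and the quantitative shape of your plan ($O(\log\log k)$ ``rounds'' each costing $2^{O(w\log(1/\eps))}$) would indeed reproduce the stated bound. But the proposal has a genuine gap, and you name it yourself: the entire argument rests on a ``collapse lemma'' asserting that one round of random restriction sends a width-$w$ read-$k$ DNF, up to small $L_2$-error, to a DNF whose width \emph{and} read number both drop substantially (doubly exponentially over rounds), while simultaneously keeping track of which Fourier coefficients survive. This is left as ``I would aim to show,'' and there is no evident mechanism for it: a random restriction with $*$-probability $\Theta(1/w)$ kills and shortens terms, but a variable lying in $k$ terms can perfectly well still lie in $k$ live terms afterwards, so the read number need not decrease at all; the sunflower observation (no nonempty-core sunflower with more than $k$ petals) constrains the term hypergraph but does not by itself couple to the randomness of $\rho$ to force the read to shrink. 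The switching lemma controls depth, not which terms survive, and bridging that is precisely the step you have not supplied --- so as written the proof does not go through.

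The paper's actual route avoids iterated restrictions entirely, and the $\log\log k$ arises from a different source. It bounds each coefficient directly via $|\widehat{f}(S)|\le\Pr_{x_{\bar S}}[\DT(f_{S|x_{\bar S}})=|S|]$, where the free set is \emph{exactly} $S$ (not a random set), and then classifies sets $S$ of size $d$ by the typical size $u$ of the union of terms that cover $S$ in Razborov's encoding. Two bounds are proved for each class $\calS_{d,u}$: a refined switching-lemma encoding gives Fourier $1$-norm at most roughly $\binom{u}{d}2^{O(d)}$ (encoding positions inside the union of covering terms rather than term-by-term), and a pointwise bound $|\widehat{f}(S)|\lesssim 2^{-(u-d)}\numCovers(S)$. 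The read enters only through the static combinatorial fact $\numCovers(S)\le (k+1)^{|S|}$, which makes all classes with $u\gtrsim w\log k\log(1/\eps)$ negligible; the surviving $1$-norm is then $\binom{O(w\log k\log(1/\eps))}{O(w\log(1/\eps))}=(\log k)^{O(w\log(1/\eps))}=2^{O(w\log\log k\log(1/\eps))}$, and concentration follows from the standard $1$-norm-to-sparsity fact. If you want to salvage your outline, the lesson is that you do not need the read to \emph{decrease} dynamically; you need the read to bound, once and for all, the number of ways a small set can be covered by terms.
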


Since $k$ is always at most $s$, our bound matches Mansour's for all DNFs (indeed, it slightly improves the dependence on $\eps$) and strengthens it for small-read ones.  The dependence on $k$ in~\Cref{thm:intro:improve} is a doubly-exponential improvement of the previous best bound of $s^{O(k^{3/2})}$ for read-$k$ DNFs~\cite{ST19}; this was in turn an exponential improvement of an~$s^{O(16^k)}$ bound by Klivans, Lee, and Wan~\cite{KLW10}, who gave the first nontrivial bounds for $k\ge 2$. 

For small values of $k$, we further improve our bound to the optimal $\poly(s)$: 

\begin{theorem}
\label{thm:intro:small-read}  For every size-$s$ DNF $f$ with read up to \modif{$\tilde{\Omega}(\log\log s)$}{$\tilde{\Theta}(\log\log s)$} and for every $\eps$, the Fourier spectrum of $f$ is $\eps$-concentrated on $(s/\eps)^{O(\log(1/\eps))}$  coefficients. 
\end{theorem}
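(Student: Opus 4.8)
The plan is to apply Theorem~\ref{thm:intro:improve} and observe that for $k \le \tilde{O}(\log\log s)$ the exponent $\log\log k$ is absorbed into the constant, giving a $\poly(s/\eps)$ bound. More precisely, Theorem~\ref{thm:intro:improve} gives concentration on $(s/\eps)^{O(\log\log k \cdot \log(1/\eps))}$ coefficients; if $k \le (\log s)^{c}$ for a suitable constant $c$ — more generally for $k$ up to $\tilde{\Theta}(\log\log s)$, which I read as $k \le (\log\log s)^{1+o(1)}$ or a polylog of $\log\log s$ — then $\log\log k \le \log\log\!\big((\log s)^c\big) = \log(c\log\log s) = O(\log\log\log s)$, which is $\omega_s(1)$ and hence does \emph{not} immediately collapse the exponent. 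So a naive application of Theorem~\ref{thm:intro:improve} is not quite enough: $\log\log\log s$ still grows.

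The key additional step, then, is to squeeze out one more $\log$ somewhere in the pipeline that produces Theorem~\ref{thm:intro:improve}. Looking at where the $\log\log k$ factor arises, it should come from a recursion or a random-restriction argument whose depth is controlled by $k$: each ``level'' of the structure contributes a constant factor to the exponent, and the number of levels is $O(\log\log k)$ in the general bound. The first step of the proof is therefore to reopen that recursion and track the sparsity more carefully when $k$ is tiny. When $k$ is a fixed constant the recursion bottoms out in $O(1)$ levels and one recovers the classical $\poly(s)$ bound for constant-read DNFs; the content here is that the recursion still has only $O(1)$ \emph{effective} levels even when $k$ is allowed to grow slowly with $s$, because the branching that normally forces $\log\log k$ extra levels only kicks in once $k$ exceeds some threshold that is itself a slowly growing function of $s$. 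Concretely, I expect the argument to split into cases: a ``base'' regime handled by a direct combinatorial/set-system argument on the terms (the ``new connections between term structure and Fourier spectrum'' advertised in the abstract), and a ``recursive'' regime entered only when some parameter exceeds $\poly\log\log s$, at which point the contribution to the exponent is still $O(\log(1/\eps))$ rather than $O(\log\log k \cdot \log(1/\eps))$.

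The steps, in order: (i) recall the structural decomposition of a read-$k$ DNF into a small number of ``sunflower-free'' or ``spread'' pieces, each of which is handled by a base-case estimate that is $\poly(s/\eps)$-sparse; (ii) bound the number of such pieces — this is where the read number enters, and the goal is to show the count is $(s/\eps)^{O(\log(1/\eps))}$, with no $\log\log k$, provided $k \le \tilde{\Theta}(\log\log s)$; (iii) combine via the standard fact that a sum of $N$ functions each $\eps/N$-concentrated on $M$ coefficients is $\eps$-concentrated on $NM$ coefficients, possibly with a union bound over the restrictions; (iv) finally verify the arithmetic: with $N, M$ both $(s/\eps)^{O(\log(1/\eps))}$ the product is again of that form. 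The main obstacle is step (ii): controlling the number of pieces — equivalently, the depth or width of the recursion — finely enough that the $\log\log k$ genuinely disappears rather than merely shrinking to $\log\log\log s$. I expect this to require exploiting that when $k$ is this small, the relevant set-system quantity (how many terms can pairwise share a variable) is bounded by a polynomial in $k$, and hence by $\poly\log\log s$, which a single $\log\log$ can swallow without further iteration.
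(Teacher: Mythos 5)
You are right that \Cref{thm:intro:improve} alone does not give the theorem (with $k=\tilde{\Theta}(\log\log s)$ the exponent $\log\log k$ still grows), but from that point on the proposal is a sketch of an argument that is never actually supplied, and its guiding picture of where the $\log\log k$ comes from is wrong. There is no recursion of depth $O(\log\log k)$, and no decomposition into ``sunflower-free'' or ``spread'' pieces, anywhere in the pipeline: the $\log\log k$ in \Cref{thm:improve} arises because the trivial bound $\numCovers(S)\le O(k)^{|S|}$ on the number of ways to cover $S$ by terms forces the cutoff on the cover-union size to be $u^\ast=\Theta(w\log k\log(1/\eps))$, and the surviving Fourier 1-norm is then $\binom{u^\ast}{O(w\log(1/\eps))}=(\log k)^{O(w\log(1/\eps))}$. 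So ``squeezing out one more log'' means improving the cover-counting bound so that the cutoff can be taken at $u^\ast=O(w\log(1/\eps))$ with no $\log k$, not pruning levels of a recursion. Your step (ii) --- the only place where the read number would enter --- is exactly this missing count, and you explicitly leave it as ``the main obstacle,'' so the proposal has no proof of the one claim that carries all the content.

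For comparison, the paper closes that gap as follows: it proves the width version (\Cref{thm:small-read}) for width-$w$ DNFs of read $k\le \log w/(16\log\log w)$ and transfers to size via \Cref{fact:small-size-small-depth}. After discarding DNFs that are $\eps$-close to the constant $1$, \Cref{fact:st} (from \cite{ST19}) gives $\sum_j 2^{-|T_j|}\le k\ln(1/\eps)$; the elementary \Cref{lemma:combinatorial} then shows any union of terms of total size $u$ uses at most $O\bigl(ku/\log(u/\ln(1/\eps))\bigr)$ terms, whence $\numCovers(S)\le 2^{u/4}$ for this range of $k$. Plugging this into the 2-norm bound (\Cref{lemma:twonorm-u}, via \Cref{cor:twonorm-u}) lets one truncate at $u^\ast=O(w\log(1/\eps))$, and the 1-norm bound (\Cref{lemma:onenorm-u}) then gives $2^{O(w\log(1/\eps))}$, i.e.\ $(s/\eps)^{O(\log(1/\eps))}$ coefficients. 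Nothing resembling this mechanism appears in your outline; moreover your combination step (iii) is itself suspect, since a DNF is not a sum of its pieces and concentration does not add across a sum without a Cauchy--Schwarz loss that your parameters do not account for. As written, the proposal is an unproven plan built on an inaccurate model of the general-$k$ argument, so it does not establish the theorem.
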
 

Previously no $\poly(s)$ bound was known for any $k = \omega_s(1)$ (even for constant $\eps$).  

Regarding the dependence on $\eps$ in these bounds, Mansour showed a lower bound $s^{\tilde{\Omega}(\log(1/\eps))}$ on the sparsity of any polynomial that $\eps$-approximates {\sc Tribes}, a read-once DNF formula.

 \Cref{thm:intro:improve,thm:intro:small-read} immediately yield faster membership query algorithms for agnostically learning small-read DNF formulas under the uniform distribution.   This is via a powerful technique of Gopalan, Kalai, and Klivans~\cite{GKK08}, showing that if every function in a concept class over $\bits^n$ can be $\eps$-approximated by $t$-sparse polynomials, then it can be agnostically learned in time $\poly(n,t,1/\eps)$.   As this implication is blackbox and by now standard, we do not elaborate further.

\subsection{Other related work} 

Recent work of Kelman, Kindler, Lifshitz, Minzer, and Safra~\cite{KKLMS20} proves that every boolean function $f$ is Fourier concentrated on $\mathbb{I}(f)^{O_\eps(\mathbb{I}(f))}$  coefficients, where $\mathbb{I}(f)$ is the total influence of~$f$.  Since $\mathbb{I}(f) \le O(\log s)$ for size-$s$ DNFs $f$, this result also recovers Mansour's bound as a special case and strengthens it for DNFs with small total influence (modulo the dependence on $\eps$).  

This result is incomparable to~\Cref{thm:intro:improve,thm:intro:small-read}.  On one hand it is more general, applicable to all functions rather than just DNF formulas.  On the other hand, there are small-read DNFs that saturate the $\mathbb{I}(f) \le O(\log s)$ bound (e.g.~{\sc Tribes} is a read-once DNF with total influence $\Theta(\log s)$). 

\subsection{Our techniques}
\label{sec:techniques}
It is well known that small-size DNFs are well-approximated by small-width\footnote{\modif{}{the \emph{width} of a DNF is the maximal length number of variables queried in a single term}} DNFs (see \Cref{fact:small-size-small-depth} in the preliminaries), so in this discussion and most of the paper, we focus on the ``width'' version of the question: that is, showing Fourier concentration for a width-$w$ DNF $f$.

Our main conceptual contribution is to bound Fourier coefficients $\Fourier{f}(S)$ by a quantity that depends on how $S$ relates to the term structure of $f$ as a DNF
: we bound $\abs{\Fourier{f}(S)}$ by the probability over a random input $x$ that $S$ can be ``covered'' by the terms that $x$ satisfies\footnote{\modif{}{we say $x$ ``satisfies'' some term if the values given by $x$ make the term output true}} in $f$ (that is, the probability that each variable of $S$ occurs in some satisfied term).
Let us call this probability the \emph{cover probability} of $S$.
We use this bound on $\abs{\Fourier{f}(S)}$ twice, to prove the two main ingredients of our proof: a Fourier 1-norm bound (\Cref{lemma:onenorm-u}) and a 2-norm bound (\Cref{lemma:twonorm-u}).

The next three headings describe what happens in Sections~\ref{sec:one-norm} through~\ref{sec:using-read}.

\paragraph{The 1-norm bound.} The first ingredient is a sharpening of Mansour's \cite{Man92} bound on the Fourier 1-norm due to low-degree monomials.
The broad structure of the proof in \cite{Man92} is to first show that $f$ is concentrated up to degree $O(w)$, then to show that the Fourier 1-norm up to that degree is at most $w^{O(w)}$, which gives concentration on the same number of coefficients.
As Mansour himself showed, this bound on the 1-norm is tight, even for read-once DNFs like {\sc Tribes}. 
Therefore, $w^{\Theta(w)}$ seemed to be the end of the story for 1-norm-based methods.

It turns out that we can make Mansour's Fourier 1-norm bound more precise by splitting monomials $x^S$ into groups $\calS_u$ based on (roughly) the size $u$ of a minimal union of terms that includes $S$.
We show a bound of $\binom{u}{O(w)}$ on the Fourier 1-norm due to $\calS_u$.
Note that this is a strict improvement on Mansour's bound: the minimal cover of a set $S$ of size $O(w)$ can involve at most $|S|$ terms and therefore will have total size at most $w|S|=O(w^2)$, in which case our bound $\binom{u}{O(w)}$ matches Mansour's bound $w^{O(w)}$.
On the other hand, for $u \ll w^2$, our bound $\binom{u}{O(w)}$ is much smaller than $w^{O(w)}$.

\modif{We prove this bound using a specialized version of Håstad's switching lemma \cite{Hastad87} that takes this cover size $u$ into account during the encoding phase.
The reason we are able to take it into account is that, unlike in the usual application of Håstad's switching lemma, the set of free variables corresponds exactly to the sets $S$ whose Fourier coefficients we are trying to bound, instead of being random sets of $\Theta(n/w)$ variables.
This also means we do not need to use any facts about the way that random restrictions affect the Fourier spectrum.}{We prove this bound by tweaking Razborov's \cite{Razborov95} proof of Håstad's switching lemma \cite{Hastad87} to take this cover size $u$ into account during the encoding phase.
We first relate the absolute value of the Fourier coefficient $\abs{\Fourier{f}(S)}$ to the probability that a random restriction to $S$ has decision tree depth $|S|$, then use Razborov's encoding to show that this probability is small.
Then, instead of separately identifying each variable of $S$ by encoding their position within a term (which costs $\log w$ bits per variable), we encode their positions all together within the union of the terms they appear in (which costs $\binom{u}{|S|}$ in total).}

\paragraph{The 2-norm bound.} For the second ingredient, we give concrete bounds on the absolute value of the \emph{individual} Fourier coefficients (as opposed to bounding their sum).
Indeed, if the Fourier 1-norm due to a family of sets is $\leq M$ and, for each $S$ in that family, $|\Fourier{f}(S)| \leq \delta$, then the total Fourier weight due to that family is at most $M\delta$.
In particular, if $M\delta \ll 1$, then we can simply discard that family.
For $S \in \calS_u$, we can bound $|\Fourier{f}(S)|$ by (roughly) $2^{-u}$ times the number of ways to cover $S$ by terms of $f$.

\paragraph{Concluding using small read.} The read of $f$ allows us to bound the number of ways a set can be covered by terms. We do this in two regimes:
\begin{itemize}
    \item In general, if the read of $f$ is $k$, then it is easy to see that any set $S$ can only be (minimally) covered in $(k+1)^{|S|}$ ways.
    Thus we can bound the 2-norm due to family $\calS_u$ by $M\delta \leq \binom{u}{O(w)}2^{-u}(k+1)^{O(w)}$, which is negligibly small for $u=\omega(w \log k)$.
    This means we can cut off at $u=O(w \log k)$, getting 1-norm at most $\binom{O(w \log k)}{O(w)} = (\log k)^{O(w)}$ for the remaining coefficients, and thus concentration on $(\log k)^{O(w)}$ coefficients (\Cref{thm:improve}).
    \item If the read of $f$ is small enough ($k \leq \frac{\log w}{\log \log w}$), then we can improve on the trivial $(k+1)^{|S|}$ bound by using a combinatorial result of \cite{ST19} which states that the expected number of satisfied terms of an unbiased read-$k$ DNF is $O(k)$. This allows us to cut off as low as $u=O(w)$, giving concentration on $2^{O(w)}$ coefficients, and thus proving Mansour's conjecture for that entire family of functions (\Cref{thm:small-read}).
\end{itemize}
We note that this choice of coefficients (keeping only $\Fourier{f}(S)$ for sets $S$ that are contained in a small union of terms) follows almost exactly the approach suggested by Lovett and Zhang~\cite{LZ19} for proving Mansour's conjecture, although we did not end up using their sparsification result.


\section{Preliminaries}
In this section we define Boolean functions, the Fourier spectrum, and DNFs along with their complexity metrics (size, width, read).
We also recall some facts that are used in Mansour's original proof \cite{Man92}.
For an in-depth treatment, see \cite{ODBook,AoBF-YouTube}. 

\subsection{Boolean functions and Fourier analysis}
We view Boolean functions as functions $f:\InBool^n \to \R$, where an input of $-1$ represents ``true'' and an input of $1$ represents ``false''.
However, for output values, we will use $1$ for ``true'' and $-1$ for ``false'', as usual.

This choice of input values makes the Fourier spectrum of $f$ more convenient to define.
For $S \subseteq [n]$, let
\[
\Fourier{f}(S) = \E_{x \in \InBool^n}[f(x)x^S]
\]
where $x^S \coloneqq \prod_{i\in S}x_i$.
Any Boolean function is uniquely represented as a multilinear polynomial, where the coefficients are exactly the values $\Fourier{f}(S)$, which we call \emph{Fourier coefficients}:
\[
f(x) = \sum_{S \subseteq [n]} \Fourier{f}(S) x^S.
\]
We say $f$ is \emph{$\epsilon$-concentrated} on a family $\calS \subseteq 2^{[n]}$ if $\sum_{S \not\in \calS}\Fourier{f}(S)^2 \leq \epsilon$, and we say that $f$ is $\epsilon$-concentrated on $M$ coefficients if there is such an $\calS$ with $\abs{\calS} \leq M$.
We define the \emph{Fourier $p$-norm} of $f$ as
\[
\p*{\sum_{S \subseteq [n]} \abs{\Fourier{f}(S)}^p}^{1/p},
\]
and the special case $p=1$ has the following property:
\begin{fact}[Exercise 3.16 in \cite{ODBook}]
\label{fact:one-norm-implies-concentration}
Let $M = \sum_{S \subseteq [n]}\abs{\Fourier{f}(S)}$ be the Fourier 1-norm of $f$, then $f$ is $\epsilon$-concentrated on $M^2/\epsilon$ coefficients.
\end{fact}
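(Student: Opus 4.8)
The plan is to pick the family $\calS$ greedily by magnitude of the coefficients. Concretely, I would set $\calS = \set{S \subseteq [n] : \abs{\Fourier{f}(S)} \geq \epsilon/M}$ (assuming $M > 0$; if $M = 0$ then $f \equiv 0$ and the claim is trivial). Two things then need to be checked: that $\calS$ is small, and that discarding everything outside $\calS$ loses at most $\epsilon$ of the Fourier weight.

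For the size bound, I would observe that every $S \in \calS$ contributes at least $\epsilon/M$ to the 1-norm, while the total 1-norm (over all sets, in particular over those in $\calS$) is $M$. Hence $\abs{\calS}\cdot(\epsilon/M) \leq \sum_{S \in \calS}\abs{\Fourier{f}(S)} \leq M$, giving $\abs{\calS} \leq M^2/\epsilon$, as desired. For the concentration bound, I would use that each discarded coefficient has magnitude below the threshold, and then factor out that threshold:
\[
\sum_{S \notin \calS}\Fourier{f}(S)^2 \;\leq\; \p*{\max_{S \notin \calS}\abs{\Fourier{f}(S)}}\sum_{S \notin \calS}\abs{\Fourier{f}(S)} \;\leq\; \frac{\epsilon}{M}\sum_{S \subseteq [n]}\abs{\Fourier{f}(S)} \;=\; \frac{\epsilon}{M}\cdot M \;=\; \epsilon.
\]
Together these show $f$ is $\epsilon$-concentrated on $\calS$ with $\abs{\calS} \leq M^2/\epsilon$.

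There is no real obstacle here: the statement is a one-line pigeonhole/Markov-type observation, and the threshold $\epsilon/M$ is the natural (indeed essentially forced) choice balancing the two estimates. The only point worth a moment's care is that the size bound should invoke the full 1-norm $M$ rather than the 1-norm restricted to $\calS$; since the latter is only smaller, replacing one by the other is harmless, but it is what makes the counting clean.
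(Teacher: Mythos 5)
Your proof is correct and is exactly the standard threshold argument for this fact (the paper itself only cites it as Exercise 3.16 of \cite{ODBook} without proof): keep the sets with $\abs{\Fourier{f}(S)} \geq \epsilon/M$, bound their number by $M^2/\epsilon$ via the 1-norm, and bound the discarded weight by $(\epsilon/M)\cdot M = \epsilon$. Nothing further is needed.
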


Finally, we say a function $g:\InBool^n \to \OutBool$ \emph{$\epsilon$-approximates} a function $f:\InBool^n \to \OutBool$ if they differ on at most an $\epsilon$ fraction of inputs, that is,
\[
\Pr_{x\in\InBool^n} [f(x) \neq g(x)] \leq \epsilon.
\]
\begin{fact}[Exercise 3.17 in \cite{ODBook}]
If $g$ $\epsilon_1$-approximates $f$ and is $\epsilon_2$-concentrated on a family $\calS$, then $f$ is $2(\epsilon_1+\epsilon_2)$-concentrated on $\calS$.
\end{fact}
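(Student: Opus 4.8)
The plan is to decompose $f$ as $g$ plus the error term $h \coloneqq f-g$ and track where the Fourier mass of $f$ outside $\calS$ can come from. First I would observe that since $f$ and $g$ are both $\OutBool$-valued, $h(x)^2$ equals the indicator of disagreement at $x$, so Parseval gives
\[
\sum_{S\subseteq[n]}\Fourier{h}(S)^2 \;=\; \E_x\!\big[h(x)^2\big] \;=\; \Pr_x[f(x)\neq g(x)] \;\le\; \epsilon_1.
\]
Next, writing $\Fourier{f}(S) = \Fourier{g}(S)+\Fourier{h}(S)$ for every $S$ and applying the elementary bound $(a+b)^2 \le 2a^2+2b^2$ termwise, I would get
\[
\sum_{S\notin\calS}\Fourier{f}(S)^2 \;\le\; 2\sum_{S\notin\calS}\Fourier{g}(S)^2 \;+\; 2\sum_{S\notin\calS}\Fourier{h}(S)^2 \;\le\; 2\epsilon_2 + 2\epsilon_1,
\]
where the first sum is bounded using that $g$ is $\epsilon_2$-concentrated on $\calS$ and the second by extending the Parseval estimate above to all $S$. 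This is exactly the claimed $2(\epsilon_1+\epsilon_2)$.

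There is no real obstacle here; the only point worth noting is the factor of $2$, which is forced by the cross term $2\Fourier{g}(S)\Fourier{h}(S)$ and is unavoidable by an argument of this flavor. Equivalently, I could phrase it geometrically: if $\Pi_\calS$ denotes the operator that zeroes out all Fourier coefficients outside $\calS$, then $\big(\sum_{S\notin\calS}\Fourier{f}(S)^2\big)^{1/2} = \norm{f-\Pi_\calS f}_2$ is the $L^2$-distance from $f$ to the space of $\calS$-supported functions, hence at most $\norm{f-\Pi_\calS g}_2 \le \norm{f-g}_2 + \norm{g-\Pi_\calS g}_2 \le \sqrt{\epsilon_1}+\sqrt{\epsilon_2}$ by the triangle inequality; squaring together with AM–GM again yields $2(\epsilon_1+\epsilon_2)$.
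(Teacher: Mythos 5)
Your proof is correct: with the paper's convention that $f,g$ take values in $\OutBool=\set{0,1}$, Parseval gives $\sum_S \Fourier{f-g}(S)^2 = \Pr[f\neq g]\le\epsilon_1$, and either your termwise $(a+b)^2\le 2a^2+2b^2$ bound or your projection/triangle-inequality variant yields $2(\epsilon_1+\epsilon_2)$. The paper itself gives no proof (it cites the fact as Exercise 3.17 of \cite{ODBook}), and your argument is exactly the standard one intended there.
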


\subsection{DNFs}
A function $f:\InBool^n \to \OutBool$ is a DNF if it can be represented as an OR of ANDs of the input variables.
Each AND is called a \emph{term}, and the number of terms is called the \emph{size}.
We write a size-$s$ DNF as $f = T_1 \vee \cdots \vee T_s$, and by abuse of notation, we frequently use $T_j$ to represent the set of variables in the $j\th$ term.

A DNF has \emph{width} $w$ if each of its terms queries at most $w$ variables (i.e. $|T_j|\leq w$ for all $j$), and \emph{read} $k$ if each variable occurs in at most $k$ terms.
We only use $w$ and $k$ as upper bounds (except in \Cref{thm:exactly-w}, where it is explicitly stated), which justifies us occasionally assuming ``large enough $w$'' or ``large enough $k$''.
A small-size DNF can be approximated by a small-width DNF:
\begin{fact}
\label{fact:small-size-small-depth}
Let $f$ be a size-$s$ DNF.
Then there is a DNF $g$ of width $\log(s/\epsilon)$ that $\epsilon$-approximates $f$. In fact, $g$ is simply obtained from $f$ by dropping some terms from $f$, so $g$'s size and read are both at most $f$'s.
\end{fact}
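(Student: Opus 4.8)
The plan is to obtain $g$ from $f$ by simply discarding every term that is too wide. Write $f = T_1 \vee \cdots \vee T_s$ and set $w = \log(s/\epsilon)$. I would let $g$ be the OR of exactly those terms $T_j$ with $|T_j| \le w$. By construction $g$ has width at most $w$, and since its term set is a subset of $f$'s, both its size and its read are at most those of $f$, which is all that is claimed beyond the approximation guarantee.

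It then remains to bound $\Pr_x[f(x) \neq g(x)]$. The key observation is that the error is one-sided: since every term of $g$ is also a term of $f$, whenever $g(x) = 1$ (some term of $g$ is satisfied) we also have $f(x) = 1$. Hence $f(x) \neq g(x)$ forces $f(x) = 1$ and $g(x) = 0$, and in particular some \emph{discarded} term, i.e. one with $|T_j| > w$, must be satisfied by $x$.

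To finish, note that for a single term $T_j$ with $|T_j| > w$, a uniformly random $x \in \InBool^n$ satisfies $T_j$ with probability $2^{-|T_j|} < 2^{-w} = \epsilon/s$. Taking a union bound over the at most $s$ discarded terms, the probability that at least one of them is satisfied, and hence the probability that $f(x) \neq g(x)$, is at most $s \cdot (\epsilon/s) = \epsilon$, as desired.

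Since the argument is entirely elementary, I do not anticipate any genuine obstacle; the only points requiring a moment's care are the direction of the one-sided error (dropping terms can only turn true outputs into false ones, never the reverse) and the observation that a term which is an AND of more than $\log(s/\epsilon)$ literals is satisfied by only a $< \epsilon/s$ fraction of inputs.
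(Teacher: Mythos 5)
Your proof is correct and is exactly the standard argument this fact relies on (the paper states it without proof, as in Mansour's work): drop all terms wider than $\log(s/\epsilon)$, note the error is one-sided, and union-bound the at most $s$ dropped terms, each satisfied with probability at most $2^{-\log(s/\epsilon)} = \epsilon/s$. No gaps.
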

This means that to prove \Cref{thm:intro:improve} and \Cref{thm:intro:small-read}, it is enough to prove the corresponding statements for width $w$: respectively, that width-$w$ read-$k$ DNFs are $\epsilon$-concentrated on $2^{O(w \log \log k \log 1/\epsilon)}$ coefficients (\Cref{thm:improve}), and that width-$w$ DNFs with read up to $\tilde{\Omega}(\log w)$ are $\epsilon$-concentrated on $2^{O(w \log 1/\epsilon)}$ coefficients (\Cref{thm:small-read}).

\subsection{Restrictions and Håstad's switching lemma}
Let $S \subseteq [n]$ be a set of variables, let $\Sc= [n]\setminus S$, and let $\xSc \in \InBool^{\Sc}$ be an assignment to only the variables in $\Sc$.
Then the \emph{restriction} of $f$ to $S$ at $\xSc$ is the function
$\fS : \InBool^S \to \R$
that maps $x_S \in \InBool^S$ to $f(x_S \circ \xSc)$, where $\circ$ denotes the act of combining vectors $x_S$ and $\xSc$ into a vector of $\InBool^{S \cup \Sc} = \InBool^n$.
We call the variables of $S$ ``free'' and the variables of $\Sc$ ``fixed''.

Let $\DT(f)$ denote $f$'s \emph{decision tree depth}: the smallest depth of a decision tree computing $f$ exactly.
Håstad's switching lemma \cite{Hastad87} states that a random restriction of $f$ (where both $S$ and $\xSc$ are chosen randomly
) is unlikely to have high decision tree depth.
It is used twice in the proof of Mansour's theorem \cite{Man92}: once to show that $f$ is concentrated on low-degree monomials (\Cref{fact:low-degree} below), and once to show that after removing the high-degree monomials, the Fourier 1-norm is low.

\begin{fact}
\label{fact:low-degree}
There is a constant $C>1$ such that any width-$w$ DNF $f$ is $\epsilon$-concentrated up to degree $Cw \log 1/\epsilon$.
\end{fact}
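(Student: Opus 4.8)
The plan is to deduce Fact~\ref{fact:low-degree} from Håstad's switching lemma in the standard way, via random restrictions and the behavior of the Fourier spectrum under restriction. Fix $p = 1/(cw)$ for a suitable absolute constant $c$ and consider the random restriction $\rho = (S,\xSc)$ in which each coordinate is placed in the free set $S$ independently with probability $p$ and $\xSc$ is then drawn uniformly from $\InBool^{\Sc}$. Håstad's switching lemma gives $\Pr_\rho[\DT(\fS) \ge d] \le (c' p w)^d$ for an absolute constant $c'$, and choosing $c$ large enough that $c' p w \le 1/2$ makes this at most $2^{-d}$. Since any function of decision-tree depth $< d$ has all its Fourier mass on sets of size $< d$, and since $\fS$ is $\OutBool$-valued so $\sum_R \Fourier{\fS}(R)^2 \le 1$ always, we get
\[
\E_\rho\Big[\,\textstyle\sum_{|R| \ge d} \Fourier{\fS}(R)^2\,\Big] \;\le\; \Pr_\rho[\DT(\fS) \ge d] \;\le\; 2^{-d}.
\]

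Next I relate this quantity back to the high-degree Fourier weight of $f$ itself. For a fixed free set $S$, the identity $\E_{\xSc}\big[\Fourier{\fS}(R)^2\big] = \sum_{T : T \cap S = R}\Fourier{f}(T)^2$ holds for every $R \subseteq S$ (expand $\fS$ and use orthogonality of the $\xSc$-characters), and averaging also over the random choice of $S$ yields
\[
\E_\rho\Big[\,\textstyle\sum_{|R|\ge d}\Fourier{\fS}(R)^2\,\Big] \;=\; \sum_{T \subseteq [n]} \Fourier{f}(T)^2 \,\Pr_S\big[\,|T \cap S| \ge d\,\big].
\]
Whenever $|T| \ge 2d/p$, the random variable $|T \cap S| \sim \mathrm{Bin}(|T|,p)$ has mean at least $2d$, hence median at least $d$, so $\Pr_S[|T\cap S| \ge d] \ge 1/2$; therefore the sum above is at least $\tfrac12 \sum_{|T| \ge 2d/p} \Fourier{f}(T)^2$. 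Comparing with the previous display gives $\sum_{|T| \ge 2d/p} \Fourier{f}(T)^2 \le 2^{1-d}$. Taking $d = \lceil \log(2/\epsilon)\rceil$ makes the right-hand side at most $\epsilon$, and $2d/p = 2cdw \le Cw\log(1/\epsilon)$ for a suitable absolute constant $C$, which is exactly the claimed degree cutoff.

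The argument is essentially bookkeeping once the two ingredients are in place; the only mildly delicate point is the elementary estimate $\Pr[\mathrm{Bin}(m,p) \ge d] \ge 1/2$ when $pm \ge 2d$, which follows from the median of a binomial being at least $\lfloor\text{mean}\rfloor$. The place to be careful is the interface with Håstad's switching lemma — pinning down the constant in $(c'pw)^d$ and confirming that the $\OutBool$-valued output convention affects normalization only through the harmless bound $\|\fS\|_2^2 \le 1$ — since a sharpened, cover-size-aware version of the same switching-lemma argument is reused later in the paper.
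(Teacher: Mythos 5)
Your proof is correct: the combination of H{\aa}stad's switching lemma under a $p=\Theta(1/w)$ random restriction, the identity $\E_{\xSc}[\Fourier{\fS}(R)^2]=\sum_{T:T\cap S=R}\Fourier{f}(T)^2$, and the binomial-median step is exactly the standard Mansour/LMN argument, and your bookkeeping (including the $\OutBool$-valued normalization $\sum_R\Fourier{\fS}(R)^2\le 1$) checks out. The paper does not prove \Cref{fact:low-degree} itself but cites it from \cite{Man92}, and your argument is essentially that cited proof, so there is nothing further to reconcile.
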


In the next section, we prove a variant of Håstad's switching lemma in order to improve on the second application.
Since the proof format of switching lemmas is quite unusual and complex, it is definitely helpful to be familiar with the proof of the original version beforehand.
An excellent pedagogical presentation of the proof can be found in \cite{switching-lemma-notes}.

\subsection{Miscellaneous}
We use $\log$ to denote the base-$2$ logarithm (though the base will rarely matter).
For a finite set $S$, we define $\binom{S}{k}$ to be the family of subsets of $S$ that have size $k$.
Finally, for a finite alphabet $A$, we use $A^\ast$ to denote the set of strings over $A$, and we write the empty string as $()$.


\section{Cover sizes and the switching lemma}
\label{sec:one-norm}
As we mentioned before, the proof of Mansour's theorem works by first proving concentration on degree up to $O(w \log 1/\epsilon)$, then showing that the Fourier 1-norm due to monomials of degree at most $d$ is at most $w^{O(d)}$.
In this paper, we use the first part as is, and focus on improving the second part: the $w^{O(d)}$ bound on the Fourier 1-norm for degree at most $d$.
The only fact we will ever use about the Fourier spectrum of $f$ is the following lemma, which bounds the absolute value $|\Fourier{f}(S)|$ of the Fourier coefficient of $S$ by the probability that a random restriction to $S$ has decision tree depth $|S|$.

\begin{lemma}
\label{lemma:evasive}
Let $S \subseteq [n]$. Then $|\Fourier{f}(S)| \leq \Pr_{\xSc \in \InBool^{\Sc}}[\DT(\fS) = |S|]$.
\end{lemma}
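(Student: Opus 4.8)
The plan is to decouple the expectation defining $\Fourier{f}(S)$ into the free variables $S$ and the fixed variables $\Sc$, recognize the inner expectation as the top Fourier coefficient of the restricted function, and then use the standard fact that decision-tree depth bounds Fourier degree.

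First I would observe that since $x^S=\prod_{i\in S}x_i$ involves only coordinates in $S$,
\[
\Fourier{f}(S) \;=\; \E_{\xSc \in \InBool^\Sc}\E_{x_S \in \InBool^S}\sqb*{\fS(x_S)\,x_S^S} \;=\; \E_{\xSc \in \InBool^\Sc}\sqb[\big]{\Fourier{\fS}(S)},
\]
so that by the triangle inequality $\abs{\Fourier{f}(S)} \le \E_{\xSc}\sqb[\big]{\,\abs{\Fourier{\fS}(S)}\,}$. It therefore suffices to prove, for each fixed $\xSc$, the pointwise bound $\abs{\Fourier{\fS}(S)} \le \One[\DT(\fS)=|S|]$; taking the expectation over $\xSc$ then yields the lemma, using that $\DT(\fS)\le |S|$ always since $\fS$ depends on at most $|S|$ variables, so the event $\DT(\fS)=|S|$ is the same as $\DT(\fS)\ge|S|$.

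Next I would establish this pointwise bound for an arbitrary $g:\InBool^S \to \OutBool$ in place of $\fS$ (legitimate since $\fS$, being a restriction of the DNF $f$, is genuinely $\OutBool$-valued). The bound splits into two cases. On one hand, $\abs{\Fourier{g}(S)} = \abs{\E_{x_S}[g(x_S)\,x_S^S]} \le \E_{x_S}[\abs{g(x_S)}] \le 1$, which handles the case $\DT(g)=|S|$. On the other hand, I claim $\Fourier{g}(S)=0$ whenever $\DT(g)<|S|$: writing $g$ as the sum over the leaves of a depth-$d$ decision tree of (leaf value) times (indicator of the subcube reaching that leaf), each such indicator is a product of at most $d$ factors of the form $\tfrac{1\pm x_i}{2}$, hence a multilinear polynomial of degree $\le d$, and the sum preserves degree $\le d$; since $x_S^S$ has degree exactly $|S|>d$, its coefficient $\Fourier{g}(S)$ vanishes. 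Combining the two cases gives $\abs{\Fourier{g}(S)} \le \One[\DT(g)=|S|]$.

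The argument is essentially bookkeeping. The only non-immediate ingredient is the ``decision-tree depth bounds Fourier degree'' fact, which is standard (see \cite{ODBook}); I would either cite it or include it in the one-line form above. The one subtlety worth stating explicitly is that the restricted functions $\fS$ are $\OutBool$-valued, so the crude estimate $\abs{\Fourier{\fS}(S)}\le 1$ in the first case is valid.
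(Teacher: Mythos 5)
Your proof is correct and follows essentially the same route as the paper: write $\Fourier{f}(S)=\E_{\xSc}[\Fourier{\fS}(S)]$, apply the triangle inequality, and use the pointwise bound $\abs{\Fourier{\fS}(S)}\leq \One[\DT(\fS)=|S|]$ coming from the facts that degree is at most decision-tree depth and that Fourier coefficients of $\OutBool$-valued functions are at most $1$ in absolute value. The only difference is that you spell out the standard degree-versus-depth fact, which the paper cites in one parenthetical line.
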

\begin{proof}
First, observe that for any Boolean function $g: \InBool^m \to \OutBool$, if $\DT(g) < m$, then $\Fourier{g}([m])=0$ (indeed, the degree of $g$ is at most $\DT(g)$), and in all cases $|\Fourier{g}([m])| \leq 1$.
Because of this, $\Fourier{g}([m]) \leq \One[\DT(g) = m]$, and applying this to $g\coloneqq\fS$, we get
\begin{equation*}
    |\Fourier{f}(S)|
    = \abs*{\E_{\xSc}[\Fourier{\fS}(S)]}
    \leq \E_{\xSc}\sqb*{\abs{\Fourier{\fS}(S)}}
    \leq \Pr_{\xSc}[\DT(\fS) = |S|].\qedhere
\end{equation*}
\end{proof}

\begin{remark}
For intuition, we think it is helpful to mentally replace the probability in \Cref{lemma:evasive} with the ``cover probability'' of $S$ which we mentioned in \Cref{sec:techniques}: the probability over a random input $x$ that every variable of $S$ is present in at least one term that $x$ satisfies.
More broadly, the notion of ``cover by terms'' will be a key player throughout this whole paper.
A formal link between these two probabilities is that when $\DT(\fS)=|S|$, it is possible to assign the variables of $S$ to make sure that they are all involved in at least one satisfied term.\footnote{If $f$ is monotone, then it is easy to see: the fact that $\DT(\fS)=|S|$ shows that all variables of $S$ are present in some term that is alive in $\fS$, and by assigning all variables of $S$ to true, we can satisfy all those terms. For $f$ non-monotone it is only slightly subtler.} This directly implies the following fact (although we will not use in this paper).
\end{remark}
\begin{fact}
Let $S \subseteq [n]$. Then $|\Fourier{f}(S)| \leq 2^{|S|}\Pr_{x \in \InBool^n}[\text{$S$ is covered by satisfied terms}]$.
\end{fact}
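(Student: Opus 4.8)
The plan is to reuse only the first step of the proof of \Cref{lemma:evasive} --- the identity $\Fourier{f}(S)=\E_{\xSc}[\Fourier{\fS}(S)]$ --- and to replace the decision-tree estimate by a direct inclusion--exclusion computation of the top Fourier coefficient of a DNF. It is tempting to argue, as in the Remark, that if $\DT(\fS)=|S|$ then one can pick $x_S$ so that every variable of $S$ lies in a satisfied term, and then push each such restriction into the ``covered'' event; this works cleanly when $f$ is monotone (set all free variables to true), but for non-monotone $f$ the implication ``$\DT(\fS)=|S|\Rightarrow$ some $x_S$ covers $S$'' can actually fail, so I would instead isolate and prove the following statement.

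\textbf{Key claim.} For any DNF $g=T_1\vee\cdots\vee T_s$ on a variable set $V$, if no input $x\in\InBool^V$ has $V$ covered by the terms it satisfies, then $\Fourier{g}(V)=0$. To prove it, pass to the indicator $\tilde g(x)=\One[x\text{ satisfies some term of }g]$, so that $\Fourier{g}(V)=2\Fourier{\tilde g}(V)$ for $V\neq\emptyset$ (since $g=2\tilde g-1$), and expand by inclusion--exclusion over terms:
\[
\tilde g=\sum_{\emptyset\neq J\subseteq[s]}(-1)^{|J|+1}\prod_{j\in J}\One[x\models T_j].
\]
For a set of terms $J$ that is jointly satisfiable, $\prod_{j\in J}\One[x\models T_j]$ is the indicator that $x$ agrees with a fixed partial assignment $\phi_J$ on $A_J\coloneqq\bigcup_{j\in J}T_j$; for unsatisfiable $J$ it is identically $0$. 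Taking the Fourier coefficient at $V$, the contribution of such a $J$ is $0$ unless $A_J=V$ (any unconstrained variable in $V\setminus A_J$ makes the expectation vanish), and if $A_J=V$ with $J$ jointly satisfiable then $\phi_J$ is a full assignment satisfying every term of $J$, i.e.\ an input whose satisfied terms cover $V$ --- contradicting the hypothesis. Hence every contribution vanishes and $\Fourier{g}(V)=0$.

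Now assemble the Fact. As above, $\abs{\Fourier{f}(S)}\leq\E_{\xSc}[\abs{\Fourier{\fS}(S)}]$. Fix $\xSc$. Since $\fS$ is a DNF on the variable set $S$ (each term of $f$ is either killed by $\xSc$ or shrinks to its $S$-part), the key claim applied to $\fS$ gives that $\Fourier{\fS}(S)=0$ unless some $x_S\in\InBool^S$ has $S$ covered by the terms of $\fS$ it satisfies; and when such an $x_S$ exists, that (nonempty) event has probability at least $2^{-|S|}$ over a uniform $x_S\in\InBool^S$, so $\abs{\Fourier{\fS}(S)}\leq 1\leq 2^{|S|}\Pr_{x_S}[S\text{ covered in }\fS]$. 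Either way $\abs{\Fourier{\fS}(S)}\leq 2^{|S|}\Pr_{x_S}[S\text{ covered in }\fS]$. Taking $\E_{\xSc}$, and noting that a term of $\fS$ is satisfied by $x_S$ exactly when the corresponding term of $f$ is satisfied by $x_S\circ\xSc$ --- so the two ``$S$ is covered'' events coincide --- yields
\[
\abs{\Fourier{f}(S)}\;\leq\;2^{|S|}\,\E_{\xSc}\Pr_{x_S}\big[S\text{ covered in }\fS\big]\;=\;2^{|S|}\Pr_{x}[\text{$S$ is covered by satisfied terms}],
\]
as claimed (the case $S=\emptyset$ is trivial, since then $\abs{\Fourier{f}(\emptyset)}\leq 1$ and $\emptyset$ is always covered).

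The one real obstacle is precisely the failure of the naive decision-tree argument for non-monotone $f$: $\DT(\fS)=|S|$ does not by itself yield a single assignment covering all of $S$. The inclusion--exclusion computation circumvents this by showing that the top Fourier coefficient only ``sees'' collections of terms whose union is all of $V$ and which are simultaneously satisfiable, and each such collection already exhibits a covering input; everything else cancels exactly. The remaining ingredients --- that a restriction of a DNF is a DNF, the one-point probability lower bound, and the bookkeeping identifying the cover events of $\fS$ and $f$ --- are routine.
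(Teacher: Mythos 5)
Your proof is correct, but it takes a genuinely different route from the paper's. The paper obtains this Fact from \Cref{lemma:evasive} combined with the Remark's claim that $\DT(\fS)=|S|$ already yields a \emph{single} assignment of the variables of $S$ under which every variable of $S$ lies in a satisfied term (so the cover probability is at least $2^{-|S|}\Pr_{\xSc}[\DT(\fS)=|S|]$); the footnote proves this only for monotone $f$ and asserts the non-monotone case is ``only slightly subtler''. You bypass decision trees entirely: by inclusion--exclusion over terms, the top Fourier coefficient of the restricted DNF $\fS$ receives contributions only from jointly satisfiable families of terms whose union is all of $S$, and any such family exhibits a covering input; hence $\Fourier{\fS}(S)\neq 0$ forces cover probability at least $2^{-|S|}$ over $x_S$, and averaging over $\xSc$ gives the Fact. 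Moreover, your skepticism about the non-monotone step is well founded: the implication ``$\DT(\fS)=|S|\Rightarrow$ some assignment covers $S$'' is false in general. For instance, take $f=(x_1\wedge x_2)\vee(\bar{x}_1\wedge x_3)\vee(\bar{x}_2\wedge\bar{x}_3)$ and $S=\{1,2,3\}$: the three terms are pairwise contradictory and each has width $2$, so no input covers $S$; yet $f$ is true everywhere except at two antipodal inputs, so restricting any one variable leaves a function with exactly one falsifying point (an OR of two literals, needing depth $2$), whence $\DT(f)=3$. (The Fact itself is unharmed there, since the two antipodal point indicators cancel in the odd top coefficient, giving $\Fourier{f}(S)=0$ --- exactly what your key claim predicts.) So your argument is not merely an alternative: it proves the implication actually needed (nonzero top coefficient of $\fS$ implies coverability), which is weaker than the Remark's claim but, unlike it, true for arbitrary DNFs. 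What the paper's route buys is a statement phrased directly in terms of the quantity $\Pr_{\xSc}[\DT(\fS)=|S|]$ that drives the rest of the paper, plus an immediate proof in the monotone case; what yours buys is an elementary, self-contained proof valid in the non-monotone case as well.
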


For pedagogical reasons, we first reprove Mansour's bound on the Fourier 1-norm at degree $d$ as a warmup, but using our \Cref{lemma:evasive}.
We will then slightly tweak the proof so that it tells us more about which coefficients contribute the most to the 1-norm (within a given degree $d$).

We start by summing up \Cref{lemma:evasive} over all sets $S$ of size $d$, and get a bound on the Fourier 1-norm at degree $d$ that depends on the number of restrictions of $f$ to $d$ variables that require full decision tree depth.
\begin{lemma}
\label{lemma:onenorm-d}
$\sum_{S:|S|=d}|\Fourier{f}(S)| \leq 2^{-(n-d)}\times\#\set{(S,\xSc):|S|=d \wedge \DT(\fS) = d}$.
\end{lemma}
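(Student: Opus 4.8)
The statement is an immediate consequence of \Cref{lemma:evasive} together with the fact that a probability over $\InBool^{\Sc}$ is a count divided by $2^{|\Sc|}$. So the plan is essentially a one-line summation, and I expect no real obstacle; the only thing to be careful about is bookkeeping the normalization factor and the fact that the set $\Sc$ (and hence which variables are ``fixed'') changes with $S$.

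First I would fix $d$ and apply \Cref{lemma:evasive} to every $S \subseteq [n]$ with $|S| = d$, giving $|\Fourier{f}(S)| \leq \Pr_{\xSc \in \InBool^{\Sc}}[\DT(\fS) = d]$, where $\Sc = [n] \setminus S$ has size $n - d$. Rewriting the probability as a count over the uniform measure on $\InBool^{\Sc}$ yields
\[
|\Fourier{f}(S)| \;\leq\; \frac{1}{2^{n-d}}\,\#\set{\xSc \in \InBool^{\Sc} : \DT(\fS) = d}.
\]
Then I would sum this inequality over all $\binom{n}{d}$ sets $S$ of size $d$. On the right-hand side, summing the counts $\#\set{\xSc : \DT(\fS) = d}$ over all such $S$ is exactly counting pairs $(S,\xSc)$ with $|S| = d$ and $\DT(\fS) = d$, so the total is $2^{-(n-d)}\times\#\set{(S,\xSc) : |S| = d \wedge \DT(\fS) = d}$, which is the claimed bound.

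The ``hard part,'' to the extent there is one, is purely conceptual rather than technical: one should note that each pair $(S,\xSc)$ in the count is genuinely well-defined (given $S$, the assignment $\xSc$ lives in $\InBool^{[n]\setminus S}$, so different $S$ contribute to disjoint families of pairs), and that the factor $2^{-(n-d)}$ is the same for every $S$ of size $d$ because $|\Sc| = n - d$ does not depend on which $d$-set $S$ is. Once that is observed, the sum telescopes into the stated expression and we are done.
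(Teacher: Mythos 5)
Your proposal is correct and follows essentially the same route as the paper: apply \Cref{lemma:evasive} to each $S$ of size $d$, rewrite each probability as $2^{-(n-d)}$ times a count over $\xSc \in \InBool^{\Sc}$, and sum over $S$ so the counts combine into the count of pairs $(S,\xSc)$. Your remark that $|\Sc| = n-d$ is the same for every $S$ of that size is exactly the bookkeeping the paper implicitly relies on.
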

\begin{proof}
By \Cref{lemma:evasive}, $\sum_{S:|S|=d}|\Fourier{f}(S)| \leq \sum_{S:|S|=d} \Pr_{\xSc \in \InBool^{\Sc}}[\DT(\fS) = d]$, and we can rewrite each probability in this sum as the average $2^{-(n-d)}\times\#\set{\xSc:\DT(\fS) = d}$.
\end{proof}

This now allows us to reprove Mansour's bound on the 1-norm, using Håstad's switching lemma  \cite{Hastad87}.
In the following proof, the sets of free variables correspond exactly to the sets whose Fourier coefficients we are trying to bound, rather than random selections of a $\Theta(1/w)$ fraction of the variables, as is usual.
This allows us to avoid using facts about how random restrictions affect the Fourier spectrum, and will later allow us to take into account properties of the specific sets $S$ we are encoding.
\begin{lemma}
\label{lemma:reprove-mansour}
$\sum_{S:|S|=d}|\Fourier{f}(S)| \leq \binom{wd}{d} 2^{2d} = w^{O(d)}$.
\end{lemma}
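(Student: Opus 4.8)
The plan is to reduce, via \Cref{lemma:onenorm-d}, to a counting problem about restrictions, and then solve that counting problem with the Razborov-style \cite{Razborov95} encoding of Håstad's switching lemma \cite{Hastad87}. Concretely, \Cref{lemma:onenorm-d} says it is enough to prove
\[
\#\set{(S,\xSc) : |S| = d \wedge \DT(\fS) = d} \le \binom{wd}{d}\, 2^{2d}\cdot 2^{n-d},
\]
and since $|\InBool^n| = 2^n$ this follows once we exhibit an \emph{injection} from the set on the left into $\InBool^n \times \calA$ for some advice set $\calA$ with $|\calA| \le \binom{wd}{d}2^{d}$: that gives $\#\set{(S,\xSc):\dots} \le 2^n |\calA|$, which rearranges to the display above.

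The injection is the standard switching-lemma encoding, specialized so that the ``free block'' of the restriction is exactly $S$. Fix $(S,\xSc)$ with $|S|=d$ and $\DT(\fS)=d$. Since $\fS$ is a function of the $d$ variables of $S$ yet has decision-tree depth $d$, its canonical decision tree (scan $T_1,T_2,\dots$ in order; at the first term not yet falsified, query all of its still-free variables, halting with output true if that term becomes satisfied) has a root-to-leaf path that reads all $d$ variables of $S$. Let $\pi$ be the lexicographically first such path and $\sigma \in \InBool^S$ the assignment it reads. Then $\pi$ splits into blocks corresponding to the terms $T_{j_1},\dots,T_{j_m}$ it visits, where $T_{j_i}$ is the first term not falsified by $\xSc$ together with the assignments of blocks $1,\dots,i-1$; the variables read in block $i$ form a set $S_i \subseteq T_{j_i}$, the $S_i$ are disjoint and partition $S$, and each block reads at least one variable, so $m \le d$, $\sum_i |S_i| = d$, and $|S_i|\le w$. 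We set $y := \sigma \circ \xSc \in \InBool^n$ and let $\mathrm{adv}$ record, block by block, (i) which subset $S_i$ of $T_{j_i}$ was read (at most $\binom{|T_{j_i}|}{|S_i|}\le\binom{w}{|S_i|}$ choices) and (ii) for each variable of $S_i$, one bit indicating whether $\sigma$ sets it to the value making the corresponding literal of $T_{j_i}$ true. Multiplying over blocks and using $\binom{a}{c}\binom{b}{c'}\le\binom{a+b}{c+c'}$ inductively, the number of possibilities for (i) is at most $\binom{wm}{d}\le\binom{wd}{d}$, and the number for (ii) is $2^{\sum_i|S_i|}=2^{d}$, so $|\calA|\le\binom{wd}{d}2^{d}$ as required. (One should think of $\prod_i\binom{w}{|S_i|}$ as morally $\binom{|T_{j_1}\cup\dots\cup T_{j_m}|}{d}$; it is exactly this union — the ``cover size'' — that the refined argument in later sections controls directly, encoding the positions of $S$ all at once inside the union rather than term by term.)

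Injectivity is the claim that $(S,\xSc)$ is recoverable from $(y,\mathrm{adv})$, which is the usual inductive decoding of the switching lemma: one replays the canonical decision tree on $f$, and at step $i$ uses the block-$i$ advice — the subset $S_i$ and the polarity bits, which pin down exactly where $\pi$ ``detours'' off the satisfying assignment of $T_{j_i}$ and hence when the path leaves $T_{j_i}$ for the next term — to recover $T_{j_i}$, the free variables $S_i$, and thus the block's contribution to $S$; after at most $d$ blocks one has $S=\bigsqcup_i S_i$, and then $\xSc$ is the restriction of $y$ to $\Sc = [n]\setminus S$. We refer to \cite{Razborov95,switching-lemma-notes} for the verification that this replay is well-defined and inverts the encoding. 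Granting injectivity, \Cref{lemma:onenorm-d} gives $\sum_{|S|=d}|\Fourier{f}(S)| \le 2^{-(n-d)}\cdot 2^{n}\cdot\binom{wd}{d}2^{d} = \binom{wd}{d}2^{2d}$, and $\binom{wd}{d}\le(ew)^{d}$ yields $\binom{wd}{d}2^{2d}\le(4ew)^d = w^{O(d)}$.

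I expect the only real obstacle to be precisely the decoding/injectivity step: checking that replaying the canonical decision tree off the concrete assignment $y$ stays synchronized with the original path $\pi$ block by block, even though $y$ has overwritten the free coordinates of the restriction with actual values — this synchronization, and the role of the per-variable polarity bits in restoring it, is the crux of Razborov's argument. Everything else here (the reduction through \Cref{lemma:onenorm-d} and the counting of $\calA$) is bookkeeping. A convenient feature of this route, and the reason we prefer it, is that because the free set is prescribed to be $S$ rather than a random set of size $\Theta(n/w)$, we never need any statement about how random restrictions act on the Fourier spectrum.
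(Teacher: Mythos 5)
Your overall route is the same as the paper's (reduce to counting pairs $(S,\xSc)$ with $\DT(\fS)=d$ via \Cref{lemma:onenorm-d}, then a Razborov-style encoding with the free set prescribed to be $S$; your window/Vandermonde count of the advice, $\binom{wd}{d}2^d$, and the final arithmetic are fine and match the paper's, which just encodes the positions of $S$ inside the union of the visited terms instead of term by term). But there is a genuine flaw exactly at the step you flag as the crux: you send the \emph{path} completion $y:=\sigma\circ\xSc$ and keep only agreement bits in the advice, whereas Razborov's decoding needs the string to carry the \emph{term-satisfying} values. With your $y$, the decoder cannot get started: every term before $T_{j_1}$ is falsified by $\xSc$ and hence by $y$, but whenever the long path continues past the first block, $\sigma$ falsifies $T_{j_1}$ as well, so $T_{j_1}$ is not ``the first term satisfied by $y$'' and is in general indistinguishable from terms already killed by $\xSc$. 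The advice cannot break this circularity, because item (i) specifies $S_i$ only as a subset of $T_{j_i}$ (positions within that term) and item (ii)'s polarity bits are relative to $T_{j_i}$'s literals, so neither can be interpreted before $j_i$ is known; and a ``try every term and check consistency with the bits'' rule admits false positives from earlier terms falsified by $\xSc$ at a position your bits happen to mark as a disagreement. The references you defer to verify injectivity for the opposite convention, so they do not cover your map.

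The fix is to swap the roles of the string and the advice, which is precisely what the paper's Algorithms 1 and 2 do: transmit $x^{\mathrm{sat}}$, the completion of $\xSc$ that satisfies each visited term on its free block (so the decoder identifies $T_{j_i}$ as the first term satisfied by the current string, all earlier terms being dead), store the decision-tree/path values of each block in the string $a$, and have the decoder overwrite the satisfying values of $S_i$ by $a$'s values before iterating, so that the next ``first satisfied term'' is again the correct one. With that change (and either your per-term windows or the paper's positions within the union string $c$ for $\sigma$), the count $2^n\cdot\binom{wd}{d}\cdot 2^d$ and hence the bound $\binom{wd}{d}2^{2d}=w^{O(d)}$ go through as you computed.
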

\begin{proof}
By \Cref{lemma:onenorm-d}, we only need to show
\[
\#\set{(S,\xSc):|S|=d \wedge \DT(\fS) = d} \leq 2^{n+d} \times \binom{wd}{d}.
\]
We will follow Razborov's \cite{Razborov95} version of the proof of the switching lemma.
This version uses an encoding argument, which consists of an encoding algorithm and a decoding algorithm.
The encoding algorithm will take as inputs a set $S$ of size $d$ and an assignment $\xSc$ of the remaining variables such that $\DT(\fS) = d$ (one of the couples $(S,\xSc)$ that we want to show are rare).
It will produce as outputs an assignment $x \in \InBool^n$ of \emph{all} the variables, an element $\sigma \in \binom{[wd]}{d}$ (indicating a $d$-size subset of $[wd]$), and a $d$-bit binary string $a \in \InBool^d$.
The decoding algorithm will uniquely recover $S$ and $\xSc$ from $(x,\sigma,a)$, showing that the encoding is injective, and therefore that there are only
\[\underbrace{2^n}_{\text{number of $x$'s}} \times \underbrace{\binom{wd}{d}}_{\text{number of $\sigma$'s}} \times \underbrace{2^d}_{\text{number of $a$'s}}\]
possible values of $(S,\xSc)$ such that $|S|=d$ and $\DT(\fS) = d$, as desired.

The encoding algorithm Algorithm~\ref{algorithm:encode} will identify a number of terms $T_{j_1}, ,\ldots, T_{j_l}$ which together contain every variable of $S$, and complete the partial assignment $\xSc$ in a way that makes those terms easy to find (with some ``hints'' stored in $a$). It then suffices to encode which variables of the union $T_{j_1} \cup \cdots \cup T_{j_l}$ belong to $S$, which is the role of the second output $\sigma$. \modif{}{Note that this is different from Razborov's encoding, instead of separately encoding the position of each variable within its term, we encode the positions of all of the variables at once, within the \emph{union} of all of the terms $T_{j_1}, \ldots, T_{j_l}$}

\modif{}{More precisely (but still in words), Algorithm~\ref{algorithm:encode} does the following:
\begin{itemize}
    \item Initialize $S'$ to $S$. $S'$ will represent the set of free variables.
    \item Initialize two partial assignments $x^{\mathrm{sat}}, x^{\mathrm{dt}}$ to $\xSc$. $x^{\mathrm{sat}}$ will be the string sent to the encoder, and is progressively made to satisfy the terms $T_{j_1}, \ldots, T_{j_l}$ (which will be identified later). $x^{\mathrm{dt}}$, on the other hand, will progressively assign the variables of $S$ in a way that maximizes the decision tree depth.
    \item Initialize a Boolean string $a$, which will contain the set of changes that the decoder needs to make to the variables of $S$ in order to go from $x^{\mathrm{sat}}$ to $x^{\mathrm{dt}}$ during the decoding.
    \item Initialize a string $c$, which will contain the union of the variables of $T_{j_1}, \ldots, T_{j_l}$, \emph{in the order that they appear} (i.e. $c$ starts with the variables of $T_{j_1}$, then the variables of $T_{j_2} \setminus T_{j_1}$, etc.).
    \item While $S'$ is not empty (i.e. there are free variables in $x^{\mathrm{sat}}$ and $x^{\mathrm{dt}}$):
    \begin{itemize}
        \item Let $T_j$ be the first term not fixed by $x^{\mathrm{dt}}$ (we will call $j_1, \ldots, j_l$ the successive values that $j$ takes in this loop).
        \item Let $S_j$ be the set of free variables in $T_j$.
        \item Let $x_{S_j}^{\mathrm{sat}}$ be the assignment to the variables of $S_j$ that make $T_j$ satisfied.
        \item Let $x_{S_j}^{\mathrm{dt}}$ be the assignment to the variables of $S_j$ that maximizes the remaining decision tree depth of $f$ (after it is restricted by $x^{\mathrm{sat}} \circ x_{S_j}^{\mathrm{sat}}$).
        \item Extend $x^{\mathrm{sat}}$ with $x_{S_j}^{\mathrm{sat}}$ and $x^{\mathrm{dt}}$ with $x_{S_j}^{\mathrm{dt}}$.
        \item Remove the variables of $S_j$ from $S'$.
        \item Add $x_{S_j}^{\mathrm{dt}}$ to string $a$.
        \item Add to $c$ the variables of $T_j$ that were not contained in the previously identified terms.
    \end{itemize}
    \item Let $\sigma$ be the set of indices within $c$ where the variables of $S$ are located ($\sigma$ is a subset of $[wd]$).
    \item Return $x^{\mathrm{sat}}$, $\sigma$, and $a$.
\end{itemize}
}

\modif{}{The decoding algorithm will progressively re-identify terms $T_{j_1}, \ldots, T_{j_l}$ by looking at the first satisfied term in $f$, and progressively replacing the assignments $x_{S_j}^{\mathrm{sat}}$ (which satisfy $T_j$) by the assignments $x_{S_j}^{\mathrm{dt}}$ (which maximize decision tree depth).
It will identify which variables of $T_j$ are part of $S$ by using $\sigma$.
Crucially, string $c$ in the encoding algorithm, which represents the union $T_{j_1} \cup \cdots \cup T_{j_l}$, follows the order of the terms, so even though at the $r\th$ step the decoding algorithm knows only terms $T_{j_1}, \ldots, T_{j_r}$, it can reconstruct the first $\abs{T_{j_1} \cup \cdots \cup T_{j_r}}$ characters of $c$, and therefore correctly recover the set $S_r$ using $\sigma$.}

\modif{}{More precisely (but still in words), Algorithm~\ref{algorithm:decode} does the following:
\begin{itemize}
    \item Initialize the set of variables $S$ to an empty set.
    \item Initialize string $c$ to an empty string. This string replicates string $c$ from the decoding algorithm, and will take the exact same sequence of values as the algorithm progresses.
    \item While $|S|<d$ (we have not found all the variables yet):
    \begin{itemize}
        \item Let $T_j$ be the first term satisfied by $x$.
        \item Add to $c$ all the variables of $T_j$ that have not been added to it in previous iterations.
        \item Within the variables newly added to $c$, look at the one whose indices within $c$ are in $\sigma$, and call this set of variables $S_j$. This will be the same as set $S_j$ in the encoding algorithm.
        \item Replace $x$'s assignment of the variables of $S_j$ by the values $a_{|S|+1}, \ldots, a_{|S|+|S_j|}$. This replaces $x_{S_j}^{\mathrm{sat}}$ by $x_{S_j}^{\mathrm{dt}}$ within $x$ (where $x_{S_j}^{\mathrm{sat}}$ and $x_{S_j}^{\mathrm{dt}}$ are defined in the encoding algorithm).
        \item Extend $S$ with $S_j$.
    \end{itemize}
    \item Return $S$ and the values of $x$ on $\Sc$.
\end{itemize}
}

\begin{algorithm}[H]
\label{algorithm:encode}
\caption{$\Enc(S,\xSc)$}
$S' \gets S$\;
$x^{\mathrm{sat}}, x^{\mathrm{dt}} \gets \xSc$\;
$a \gets () \in \InBool^\ast$\tcp*{empty Boolean string}
$c \gets () \in [n]^\ast$\tcp*{empty string of variables}
\While{$S' \neq \emptyset$}{
    $j \gets \min\set{j : T_j(x^{\mathrm{dt}}) \not\equiv 0}$\tcp*{$T_j$ is the first term unfixed by $x^{\mathrm{dt}}$}
    $S_j \gets T_j \cap S'$\tcp*{the set of variables unfixed in $T_j$}
    $x_{S_j}^{\mathrm{sat}} \gets$ the assignment of $S_j$ such that $T_j(x_{S_j}^{\mathrm{dt}} \circ x_{S_j}^{\mathrm{sat}}) \equiv 1$\;
    $x_{S_j}^{\mathrm{dt}} \gets$ any assignment of $S_j$ such that $\DT(f_{(S'\setminus S_j) | (x^{\mathrm{dt}}\circ x_{S_j}^{\mathrm{dt}})}) = |S' \setminus S_j|$\;
    $x^{\mathrm{sat}} \gets x^{\mathrm{sat}} \circ x_{S_j}^{\mathrm{sat}}$\;
    $x^{\mathrm{dt}} \gets x^{\mathrm{dt}} \circ x_{S_j}^{\mathrm{dt}}$\;
    $S' \gets S' \setminus S_j = S' \setminus T_j$\;
    Append $a$ with $x_{S_j}^{\mathrm{dt}}$\;
    Append $c$ with all variables of $T_j$ that are not yet in $c$\;
}
$\sigma \gets \set{k \in |c| : c_k \in S}$\tcp*{the positions of the variables of $S$ within $c$}
\KwRet{$(x^{\mathrm{sat}},\sigma,a)$}
\end{algorithm}

\begin{definition}
\label{def:jr}
Let $j_1 < \cdots < j_l$ be the successive values taken by $j$ in Algorithm~\ref{algorithm:encode}.
\end{definition}

\begin{claim}
\label{claim:encode-correct}
The encoding algorithm runs successfully, and in particular,
\begin{enumerate}[(i)]
    \item at the start of each run of the while loop, $x^{\mathrm{sat}}$ and $x^{\mathrm{dt}}$ are both assignments of all variables except $S'$, and $\DT(f_{S'|x^{\mathrm{dt}}}) = |S'|$;
    \item $x_{S_j}^{\mathrm{dt}}$ always exists;
    \item $S = S_{j_1} \cup \cdots \cup S_{j_l}$;
    \item $c$ contains all variables of $S$;
    \item $|c| \leq wd$.
\end{enumerate}
\end{claim}
\begin{proof}
\begin{enumerate}[(i)]
    \item Clear by induction and the choice of $x_{S_j}^{\mathrm{dt}}$.
    \item By (i), $\DT(f_{S'|x^{\mathrm{dt}}}) = |S'|$, and there is always a way to assign $|S_j|$ variables without decreasing the decision tree depth by more than $|S_j|$.
    \item Clear since $S_{j_r} \subseteq S$ for all $r \in [l]$ and by the end of the algorithm, $\emptyset = S' = S \setminus S_{j_1} \setminus \cdots \setminus S_{j_l}$.
    \item By (iii), $S = S_{j_1} \cup \cdots \cup S_{j_l} \subseteq T_{j_1} \cup \cdots \cup T_{j_l}$, and $c$ is constructed to contain exactly the variables of $T_{j_1} \cup \cdots \cup T_{j_l}$.
    \item Each $T_{j_r}$ contains at least one variable of $S$, so $l < |S| = d$, and each term has at most $w$ variables, so $|c| = |T_{j_1} \cup \cdots \cup T_{j_l}| \leq wd$.\qedhere
\end{enumerate}
\end{proof}

\begin{algorithm}[H]
\label{algorithm:decode}
\caption{$\Dec(x,\sigma,a)$}
$S \gets \emptyset$\;
$c \gets () \in [n]^\ast$\;
\While{$|S| < d$}{
    $j \gets \min\set{j : T_j(x) = 1}$\tcp*{$T_j$ is the first term satisfied by $x$}
    Append $c$ with all variables of $T_j$ that are not yet in $c$\;
    $S_j \gets \set{c_k : k \in \sigma \wedge k \leq |c|} \setminus S$\;
    Replace $x$'s assignment of $S_j$ by the values $a_{|S|+1}, \ldots, a_{|S|+|S_j|}$\;
    $S \gets S \cup S_j$\;
}
\KwRet{$(S,x|_{\Sc})$}
\end{algorithm}

\begin{claim}
The successive values that $j$ and $S_j$ take in $\Dec(\Enc(S, \xSc))$ are exactly $j_1, \ldots, j_l$ and $S_{j_1}, \ldots, S_{j_l}$ (i.e. the same values they took in $\Enc(S, \xSc)$).
\end{claim}
\begin{proof}
We will show by induction that when the $r\th$ run of the while loop starts, we have
\begin{itemize}
    \item $S = S_{j_1} \cup \cdots S_{j_{r-1}}$;
    \item $c$ contains the variables of the union $T_{j_1} \cup \cdots \cup T_{j_{r-1}}$;
    \item $x=\xSc \circ x_{S_{j_1}}^\text{dt} \circ \cdots \circ x_{S_{j_{r-1}}}^\text{dt} \circ x_{S_{j_{r}}}^\text{sat} \circ \cdots \circ x_{S_{j_{l}}}^\text{sat}$.
\end{itemize}
If this is the case $r$, then we argue that the first term satisfied by $x$ at the start of the $r\th$ run is $T_{j_r}$. Indeed,
\begin{itemize}
    \item $x' \coloneqq \xSc \circ x_{S_{j_1}}^\text{dt} \circ \cdots \circ x_{S_{j_{r-1}}}^\text{dt}$ is exactly the value of $x^\text{dt}$ in the $r\th$ run of the while loop in $\Enc(S, \xSc)$;
    \item by \Cref{claim:encode-correct}(ii), $f$ is undecided by $x'$, so $f_{(S_{j_r} \cup \cdots \cup S_{j_l})|x'}$ has no satisfied term, and in particular, by the definition of $T_{j_r}$, all terms before $T_{j_r}$ are \emph{un}satisfied by $x'$;
    \item $x_{S_{j_r}}^\text{sat}$ is defined to satisfy $T_{j_r}$.
\end{itemize}
Therefore, $j$ will be assigned to $j_r$, $c$ will be appropriately extended, which will allow the algorithm to correctly recover $S_{j_r}$ add it to $S$, and replace the values of $x$ on $S_{j_r}$ by the values $x_{S_{j_r}}^\text{dt}$ stored in $a$, proving the inductive hypothesis for $r+1$.
\end{proof}

As a consequence, the first output of $\Dec$ is indeed $S_{j_1} \cup \cdots \cup S_{j_r} = S$, and the second output must be $\xSc$, since the values of the variables outside of $S$ were never modified by either algorithm.
Therefore, for all valid $(S,\xSc)$, we have $\Dec(\Enc(S, \xSc)) = (S,\xSc)$, which concludes the proof of \Cref{lemma:reprove-mansour}.
\end{proof}

The main cost in the above lemma is the $\binom{wd}{d}$ factor, where the $wd$ comes from the fact that covering a set $S$ of size $d$ by terms of $f$ can take up to $d$ terms, for a total size $wd$.
This suggests that one might get savings if we know that a set $S$ is ``typically'' covered by terms whose union has size much less than $wd$.
This motivates the following definitions.

\begin{definition}[$\cover(S,\xSc)$, $u(S,\xSc)$]
Given $(S,\xSc)$ such that $\DT(\fS) = |S|$, let $\cover(S, \xSc) \coloneqq \set{j_1, \cdots, j_l}$ and $u(S, \xSc) \coloneqq |T_{j_1} \cup \cdots \cup T_{j_l}|$ where $j_1, \ldots, j_l$ are the successive values of $j$ obtained when running $\Enc(S,\xSc)$ (just like in \Cref{def:jr}).
\end{definition}
\begin{fact}
\label{fact:cover-covers}
For any $(S,\xSc)$ such that  $\DT(\fS) = |S|$,
\begin{enumerate}[(i)]
    \item $S \subseteq \bigcup_{j \in \cover(S,\xSc)} T_j$;
    \item $\abs{\cover(S,\xSc)} \leq |S|$;
    \item for all $j \in \cover(S,\xSc)$, $T_j$ is alive under the partial assignment $\xSc$ (not yet fixed to $0$ or $1$);
    \item for all $j \in \cover(S,\xSc)$, $T_j$ contains at least one variable of $S$.
\end{enumerate}
\end{fact}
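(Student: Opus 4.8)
The plan is to read off all four items directly from the run of the encoding algorithm $\Enc(S,\xSc)$ and from \Cref{claim:encode-correct}; the Fact is really a bookkeeping corollary. Write $\cover(S,\xSc) = \set{j_1 < \cdots < j_l}$ and, for each iteration $r$ of the while loop of \Cref{algorithm:encode}, let $S_{j_r}$, $x^{\mathrm{dt}}$, etc.\ denote the corresponding objects in that iteration. If $S = \emptyset$ then $\DT(\fS) = 0 = |S|$, the loop never runs, $\cover(S,\xSc) = \emptyset$, and all four statements hold vacuously; so assume $S \neq \emptyset$, hence $\DT(\fS) = |S| \geq 1$. I would first record two simple facts about the $S_{j_r}$. \emph{They are nonempty:} by \Cref{claim:encode-correct}(i), at the start of iteration $r$ we have $\DT(f_{S'|x^{\mathrm{dt}}}) = |S'| \geq 1$, so $x^{\mathrm{dt}}$ does not fix $f$ to a constant; in particular the chosen term $T_{j_r}$, which satisfies $T_{j_r}(x^{\mathrm{dt}}) \not\equiv 0$, is not fixed to $1$ by $x^{\mathrm{dt}}$ either (else $f_{S'|x^{\mathrm{dt}}} \equiv 1$), so it is not fixed at all and still has a free variable, i.e.\ $S_{j_r} = T_{j_r} \cap S' \neq \emptyset$. \emph{They are pairwise disjoint subsets of $S$:} $S'$ starts at $S$ and only shrinks, and $S_{j_{r+1}} \subseteq S' \setminus S_{j_r}$ once $S_{j_r}$ has been removed.

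Given this, (iv) is immediate: $\emptyset \neq S_{j_r} \subseteq T_{j_r} \cap S$, so each $T_j$ with $j \in \cover(S,\xSc)$ contains a variable of $S$. Item (i) is \Cref{claim:encode-correct}(iii) combined with the above: $S = S_{j_1} \cup \cdots \cup S_{j_l} \subseteq T_{j_1} \cup \cdots \cup T_{j_l} = \bigcup_{j \in \cover(S,\xSc)} T_j$. And (ii) follows because $S_{j_1}, \ldots, S_{j_l}$ are $l$ pairwise disjoint nonempty subsets of $S$, forcing $l = |\cover(S,\xSc)| \leq |S|$.

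For (iii), fix $j = j_r \in \cover(S,\xSc)$. In iteration $r$ the assignment $x^{\mathrm{dt}}$ extends $\xSc$ and satisfies $T_j(x^{\mathrm{dt}}) \not\equiv 0$; since setting a literal of $T_j$ to false under $\xSc$ would already force $T_j \equiv 0$ under every extension, $\xSc$ does not fix $T_j$ to $0$. Nor does $\xSc$ fix $T_j$ to $1$: that would make $\fS \equiv 1$ and $\DT(\fS) = 0$, contradicting $\DT(\fS) = |S| \geq 1$. Hence $T_j$ is alive under $\xSc$, as claimed. Note that monotonicity of $f$ is nowhere used, so the argument handles non-monotone DNFs uniformly. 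I do not expect a genuine obstacle here; the only points needing care are pinning down what ``alive'' means (not fixed to $0$ or $1$) and checking that each $S_{j_r}$ is nonempty, which is exactly where \Cref{claim:encode-correct}(i) is used.
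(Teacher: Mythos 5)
Your proof is correct and follows the same route as the paper, which simply asserts the fact is ``clear by inspecting Algorithm~\ref{algorithm:encode}''; you have just carried out that inspection in detail, with the nonemptiness and disjointness of the sets $S_{j_r}$ (via \Cref{claim:encode-correct}) doing the work for (ii)--(iv). No gaps.
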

\begin{proof}
Clear by inspecting Algorithm~\ref{algorithm:encode}.
\end{proof}
\begin{definition}[$\calS_{d,u}$]
Let $\calS_{d,u}$ be the family of sets $S$ of $d$ variables such that the most frequent value of $\cover(S, \xSc)$ is $u$ (breaking ties arbitrarily). In other words,
\[
\calS_{d,u} = \set{S \subseteq [n]: |S| = d \wedge u = \argmax_u \Pr_{\xSc}[\DT(\fS) = d \wedge u(S,\xSc) = u]}.
\]
\end{definition}

When $S \in \calS_{d,u}$, we get to assume that $\cover(S,\xSc) = u$ for only a small extra factor.
In doing this, we replace Lemmas~\ref{lemma:evasive} and~\ref{lemma:onenorm-d} by the following two lemmas.
\begin{lemma}
\label{lemma:abs-fourier-u}
If $S \in \calS_{d,u}$, then
$\abs{\Fourier{f}(S)} \leq (wd+1) \times \Pr_\xSc[\DT(\fS) = d \wedge u(S,\xSc) = u]$.
\end{lemma}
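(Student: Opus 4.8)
The plan is to combine \Cref{lemma:evasive} with a one-line pigeonhole argument over the possible values of the cover size $u(S,\xSc)$. By \Cref{lemma:evasive} we already have $\abs{\Fourier{f}(S)} \leq \Pr_{\xSc}[\DT(\fS) = d]$, so it suffices to show that a single value of $u$ — namely the one appearing in the definition of $\calS_{d,u}$ — accounts for at least a $\tfrac{1}{wd+1}$ fraction of the probability of the event $\{\DT(\fS)=d\}$.

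First I would decompose $\Pr_{\xSc}[\DT(\fS) = d] = \sum_{u'} \Pr_{\xSc}[\DT(\fS) = d \wedge u(S,\xSc) = u']$, where $u'$ ranges over the possible values of the cover size. The key observation is that whenever $\DT(\fS) = d$, running $\Enc(S,\xSc)$ produces a string $c$ whose length equals $u(S,\xSc) = \abs{T_{j_1} \cup \cdots \cup T_{j_l}}$, and by \Cref{claim:encode-correct}(v) this is at most $wd$ (the cover uses at most $d$ terms, each of width at most $w$). Hence $u(S,\xSc)$ takes at most $wd+1$ distinct values, so the sum above has at most $wd+1$ nonzero terms. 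Since $S \in \calS_{d,u}$ means exactly that $u$ maximizes $\Pr_{\xSc}[\DT(\fS) = d \wedge u(S,\xSc) = u']$ over $u'$, the maximal term is at least the average of these (at most $wd+1$) terms, i.e.\ $\Pr_{\xSc}[\DT(\fS) = d \wedge u(S,\xSc) = u] \geq \tfrac{1}{wd+1}\Pr_{\xSc}[\DT(\fS) = d]$. Rearranging and chaining with \Cref{lemma:evasive} yields $\abs{\Fourier{f}(S)} \leq \Pr_{\xSc}[\DT(\fS) = d] \leq (wd+1)\Pr_{\xSc}[\DT(\fS) = d \wedge u(S,\xSc) = u]$, which is the claim.

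There is essentially no real obstacle here beyond bookkeeping; the only point requiring care is the justification that the cover size ranges over at most $wd+1$ values, which is immediate from \Cref{claim:encode-correct}(v) (equivalently \Cref{fact:cover-covers}(i)--(ii)). One could even restrict $u$ to the range $\{d,\ldots,wd\}$ — since the cover must contain all $d$ variables of $S$ — but this does not meaningfully sharpen the stated bound, so I would not bother.
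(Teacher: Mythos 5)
Your proposal is correct and is essentially the paper's own argument: bound $\abs{\Fourier{f}(S)}$ by $\Pr_{\xSc}[\DT(\fS)=d]$ via \Cref{lemma:evasive}, then use the fact that $u(S,\xSc)$ takes at most $wd+1$ values together with the maximality of $u$ in the definition of $\calS_{d,u}$ to conclude by pigeonhole. Even your closing remark about restricting $u$ to $\{d,\ldots,wd\}$ mirrors the paper's footnote, so there is nothing to add.
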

\begin{proof}
As we showed before, the maximum value of $u(S,\xSc)$ is $wd$, so it can take at most $wd+1$ values.\footnote{In fact at most $wd-d+1$ values, since $u(S,\xSc)$ is always at least $d$.}
Therefore, since $u$ maximizes $\Pr_\xSc[\DT(\fS) = d \wedge u(S,\xSc) = u]$, we have
\[
\Pr_\xSc[\DT(\fS) = d] \leq (wd+1) \Pr_\xSc[\DT(\fS) = d \wedge u(S,\xSc) = u].
\]
We then conclude by \Cref{lemma:evasive}.
\end{proof}

\begin{lemma}
$\sum_{S \in \calS_{d,u}} \abs{\Fourier{f}(S)} \leq (wd+1)2^{-(n-d)}\times\#\set{(S,\xSc):|S|=d \wedge \DT(\fS) = d \wedge u(S,\xSc) = u}$
\end{lemma}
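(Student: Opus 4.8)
The plan is to mirror the proof of \Cref{lemma:onenorm-d}, replacing the appeal to \Cref{lemma:evasive} with \Cref{lemma:abs-fourier-u}. First I would sum the inequality of \Cref{lemma:abs-fourier-u} over all $S \in \calS_{d,u}$, obtaining
\[
\sum_{S \in \calS_{d,u}} \abs{\Fourier{f}(S)} \leq (wd+1) \sum_{S \in \calS_{d,u}} \Pr_{\xSc \in \InBool^{\Sc}}[\DT(\fS) = d \wedge u(S,\xSc) = u].
\]
Next, since $S$ has size $d$, the complement $\Sc$ has size $n-d$, so each probability in the sum can be rewritten as the normalized count $2^{-(n-d)} \times \#\set{\xSc : \DT(\fS) = d \wedge u(S,\xSc) = u}$. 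Pulling the common factor $2^{-(n-d)}$ out, the sum over $S \in \calS_{d,u}$ of these counts is exactly $\#\set{(S,\xSc) : S \in \calS_{d,u} \wedge |S|=d \wedge \DT(\fS) = d \wedge u(S,\xSc) = u}$.

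Finally, I would drop the constraint $S \in \calS_{d,u}$: this can only enlarge the set of pairs being counted, so the count is at most $\#\set{(S,\xSc):|S|=d \wedge \DT(\fS) = d \wedge u(S,\xSc) = u}$, which yields the claimed bound. (One should note in passing that $u(S,\xSc)$ is only defined when $\DT(\fS)=|S|=d$, so the event in the count is well-posed.)

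There is essentially no obstacle here — this lemma is a bookkeeping step that repackages \Cref{lemma:abs-fourier-u} into a form parallel to \Cref{lemma:onenorm-d}, so that later sections can bound $\#\set{(S,\xSc):|S|=d \wedge \DT(\fS)=d \wedge u(S,\xSc)=u}$ directly (via the $u$-aware encoding argument) and thereby control the Fourier 1-norm contributed by $\calS_{d,u}$. The only thing to be slightly careful about is the counting identity relating the average of $\Pr_\xSc[\cdots]$ over $S$ to the count of valid pairs, and the (harmless) relaxation of the $\calS_{d,u}$ membership constraint; both are routine.
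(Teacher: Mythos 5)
Your proposal is correct and matches the paper's own (very brief) proof, which likewise sums \Cref{lemma:abs-fourier-u} over $S \in \calS_{d,u}$ and rewrites each probability as a count normalized by $2^{-(n-d)}$, just as in \Cref{lemma:onenorm-d}; your extra step of dropping the $\calS_{d,u}$ membership constraint to enlarge the count is a harmless and valid relaxation.
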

\begin{proof}
Sum up \Cref{lemma:abs-fourier-u} then transform the probability into an average, similar to \Cref{lemma:onenorm-d}.
\end{proof}

We can now run the same encoding argument, but with the guarantee that $|T_{j_1} \cup \ldots\cup T_{j_l}|=u$, to show the following more specialized 1-norm bound.
\begin{lemma}
\label{lemma:onenorm-u}
$\sum_{S \in \calS_{d,u}} \abs{\Fourier{f}(S)} \leq (wd+1)\binom{u}{d} 2^{2d} = \p*{\frac{u}{d}}^{O(d)}$.
\end{lemma}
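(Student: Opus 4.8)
The plan is to rerun the encoding argument of \Cref{lemma:reprove-mansour} essentially unchanged, making a single extra observation: conditioning on $u(S,\xSc) = u$ forces the string $c$ built by $\Enc$ to have length exactly $u$, so that its output $\sigma$ ranges over $\binom{[u]}{d}$ rather than $\binom{[wd]}{d}$. By the lemma immediately preceding this one, it is enough to show
\[
\#\set{(S,\xSc) : |S| = d \wedge \DT(\fS) = d \wedge u(S,\xSc) = u} \leq 2^{n+d}\binom{u}{d},
\]
since then $\sum_{S \in \calS_{d,u}}\abs{\Fourier{f}(S)} \leq (wd+1)2^{-(n-d)} \cdot 2^{n+d}\binom{u}{d} = (wd+1)\binom{u}{d}2^{2d}$, which is the claimed bound.

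To prove the displayed inequality I would reuse Algorithms~\ref{algorithm:encode} and~\ref{algorithm:decode} verbatim, together with \Cref{claim:encode-correct} and the decoding claim, since the only place the cover size entered the earlier argument was the crude estimate $|c| \leq wd$ in \Cref{claim:encode-correct}(v). For the pairs $(S,\xSc)$ we are now counting we instead have $u(S,\xSc) = u$, i.e.\ $|T_{j_1} \cup \cdots \cup T_{j_l}| = u$; since $c$ is constructed to list exactly the variables of $T_{j_1} \cup \cdots \cup T_{j_l}$, it terminates with $|c| = u$. Hence $\sigma = \set{k \leq |c| : c_k \in S}$ is a $d$-element subset of $[u]$, of which there are $\binom{u}{d}$, while the other two outputs $x^{\mathrm{sat}}$ and $a$ still range over $\InBool^n$ and $\InBool^d$ respectively. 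So $\Enc$ sends the counted set injectively into a set of size $2^n \binom{u}{d} 2^d = 2^{n+d}\binom{u}{d}$, the injectivity being witnessed exactly by $\Dec(\Enc(S,\xSc)) = (S,\xSc)$, which was already established.

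The one point worth double-checking is that decoding still succeeds now that we no longer invoke any a priori bound on $|c|$; but Algorithm~\ref{algorithm:decode} never used one. It rebuilds $c$ one term at a time, and at the start of its $r\th$ iteration it has reconstructed precisely the prefix of $c$ corresponding to $T_{j_1} \cup \cdots \cup T_{j_{r-1}}$ (this is the content of the decoding claim, which relies only on $c$ following term order and on \Cref{claim:encode-correct}(ii)); it then appends the variables of $T_{j_r}$ and reads off $S_{j_r}$ from the entries of $\sigma$ that now lie within the current $|c|$. So correctness is unaffected, and I do not expect any genuine obstacle here beyond this bookkeeping observation — the real work was done in setting up the encoding machinery for \Cref{lemma:reprove-mansour}. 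Finally, for the asymptotic form, bound $\binom{u}{d} \leq (eu/d)^d$; using $w \leq 2^d$ (valid since $d = \Omega(\log w)$ in all our applications, indeed $d = \Theta(w\log(1/\epsilon))$) the remaining factor $(wd+1)2^{2d}$ is $2^{O(d)}$, which yields the stated $(u/d)^{O(d)}$.
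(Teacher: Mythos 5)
Your proposal is correct and follows essentially the same route as the paper, which proves \Cref{lemma:onenorm-u} by rerunning the encoding argument of \Cref{lemma:reprove-mansour} and observing that the condition $u(S,\xSc)=u$ forces $|c|=u$, so $\sigma$ can be taken from $\binom{[u]}{d}$; combining with the preceding lemma gives $(wd+1)\binom{u}{d}2^{2d}$ exactly as you do. Your extra check that the decoder never relied on a bound for $|c|$ is a fair and accurate piece of bookkeeping that the paper leaves implicit.
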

\begin{proof}
Same as the proof of \Cref{lemma:reprove-mansour}, but now with a bound of $u$ on the final size of $c$, which allows us to pick $\sigma$ from the smaller set $\binom{[u]}{d}$.
\end{proof}


\section{In how many ways can a set of variables be covered by terms?}
\label{sec:two-norm}
At this point, a fair question would be: what was the point of proving these refined bounds for the Fourier 1-norm based on the typical cover size $u$ if the worst case ($d=w$, $u=wd$) gives $w^{\Theta(w)}$ anyway?
To see how this is useful, let us look at the contribution of family $\calS_{d,u}$ to the Fourier weight:
\begin{equation}
\label{eq:max-fourier}
\sum_{\calS_{d,u}} \Fourier{f}(S)^2
\leq \p*{\sum_{\calS_{d,u}} |\Fourier{f}(S)|} \times \max_{\calS_{d,u}} |\Fourier{f}(S)|
\leq (wd+1)\binom{u}{d}2^{2d} \max_{\calS_{d,u}} |\Fourier{f}(S)|.
\end{equation}
If we can bound $\max_{\calS_{d,u}} |\Fourier{f}(S)|$ by something that decreases faster than $(wd+1)\binom{u}{d}2^{2d}$ increases, then we can bound $\calS_{d,u}$'s contribution to the Fourier weight.

Now, how can we use the fact that $S \in \calS_{d,u}$ to bound $|\Fourier{f}(S)|$?
Well, we know from \Cref{lemma:abs-fourier-u} that
\begin{equation}
\label{eq:recap-lemma}
\abs{\Fourier{f}(S)} \leq (wd+1) \times \Pr_\xSc[\DT(\fS) = d \wedge u(S,\xSc) = u].
\end{equation}
It turns out that we can bound the above probability by bounding the number of ways that one can cover $S$ by a union of terms whose total size is $u$.
\begin{definition}
Given $S \in \calS_{d,u}$, let
\[
\numCovers(S) \coloneqq \#\set*{\calT \subseteq [m] : S \subseteq \bigcup_{j \in \calT} T_j \wedge \abs{\calT} \leq |S| \wedge  \abs*{\bigcup_{j \in \calT} T_j} = u}.
\]
This roughly represents the ``number of ways $S$ can be covered by terms'', and as we will see, this is an upper bound on the number of possible values of $\cover(S,\xSc)$.
\end{definition}

\begin{lemma}
\label{lemma:abs-leq-num-covers}
Let $S \in \calS_{d,u}$. Then
\begin{align*}
\Pr_\xSc[\DT(\fS) = d \wedge u(S,\xSc) = u]
&\leq 2^{-(u-d)} \#\set{\cover(S,\xSc):\DT(S,\xSc) = d \wedge u(S,\xSc) = u}\\
&\leq 2^{-(u-d)} \numCovers(S).
\end{align*}
\end{lemma}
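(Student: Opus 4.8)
The plan is to establish the two inequalities separately, the second being a definitional unwinding and the first a union bound over the possible values of the cover.

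For the second inequality I would fix any $\xSc$ with $\DT(\fS)=d$ and $u(S,\xSc)=u$ and set $\calT \coloneqq \cover(S,\xSc)$. By \Cref{fact:cover-covers}(i), $S \subseteq \bigcup_{j\in\calT}T_j$; by \Cref{fact:cover-covers}(ii), $\abs{\calT}\leq \abs{S}=d$; and by the definition of $u(S,\xSc)$, $\abs{\bigcup_{j\in\calT}T_j}=u$. Hence $\calT$ is one of the sets counted by $\numCovers(S)$, so the collection $\set{\cover(S,\xSc):\DT(\fS)=d\wedge u(S,\xSc)=u}$ is a subset of it, giving the cardinality bound.

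For the first inequality I would partition the event $\set{\DT(\fS)=d\wedge u(S,\xSc)=u}$ according to the value $\calT$ taken by $\cover(S,\xSc)$, noting that only $\calT$ with $\abs{\bigcup_{j\in\calT}T_j}=u$ can occur under this event, so the number of contributing values of $\calT$ is exactly $\#\set{\cover(S,\xSc):\DT(\fS)=d\wedge u(S,\xSc)=u}$. It then suffices to show $\Pr_\xSc[\cover(S,\xSc)=\calT]\leq 2^{-(u-d)}$ for each such fixed $\calT$, since summing over these values of $\calT$ yields the claim. To bound this single probability, write $U\coloneqq\bigcup_{j\in\calT}T_j$. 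If $\cover(S,\xSc)=\calT$, then by \Cref{fact:cover-covers}(iii) every term $T_j$ with $j\in\calT$ is alive under $\xSc$, and since a conjunction is alive exactly when none of its already-fixed literals is falsified, this forces each variable of $U$ fixed by $\xSc$ to take the unique value consistent with the literal in which it appears. The variables of $U$ fixed by $\xSc$ are precisely those of $U\setminus S$, and $\abs{U\setminus S}=u-d$ because $S\subseteq U$ by \Cref{fact:cover-covers}(i). (If two terms of $\calT$ impose conflicting values on some variable, the event is empty and the bound holds trivially.) Since $\xSc$ is uniform, the probability that these $u-d$ coordinates all hit their prescribed values is at most $2^{-(u-d)}$.

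The bookkeeping here is routine; the only points needing care are the observation that ``$T_j$ alive under $\xSc$'' pins down every variable of $T_j\cap\Sc$, and the fact that $S\subseteq U$ makes $\abs{U\setminus S}=u-d$ an exact count rather than an inequality — this is what makes the $2^{-(u-d)}$ savings as large as stated.
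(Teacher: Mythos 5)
Your proposal is correct and follows essentially the same route as the paper's proof: a union bound over the possible values of $\cover(S,\xSc)$, with each fixed cover contributing at most $2^{-(u-d)}$ because the alive terms (\Cref{fact:cover-covers}(iii)) pin down the $u-d$ variables of the union outside $S$, and the second inequality coming from \Cref{fact:cover-covers}(i) and (ii) together with the definition of $u(S,\xSc)$. Your explicit handling of the ``conflicting literals'' edge case is a minor extra precision the paper leaves implicit.
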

\begin{proof}
Let us first show the first inequality.
Let $\{j_1, \ldots, j_l\}$ be the value of $\cover(S,\xSc)$ for some $\xSc$ such that $\DT(\fS)=d$ and $u(S,\xSc) = u$.
Then by \Cref{fact:cover-covers}(iii), terms $T_{j_1}$ must all be alive under partial assignment $\xSc$.
This constraint fixes the values of all variables in $(T_{j_1} \cup \cdots \cup T_{j_l}) \setminus S$, of which there are $u-d$, so this value of $\cover(S,\xSc)$ can only contribute $2^{-(u-d)}$ to the probability.
This shows the first inequality. The second inequality is a consequence of \Cref{fact:cover-covers}(i) and (ii).
\end{proof}

Putting together \eqref{eq:max-fourier}, \eqref{eq:recap-lemma} and \Cref{lemma:abs-leq-num-covers}, we obtain the following.
\begin{lemma}
\label{lemma:twonorm-u}
$\sum_{\calS_{d,u}} \Fourier{f}(S)^2 \leq (wd+1)^2\binom{u}{d}2^{3d}2^{-u} \times\max_{|S|=d}\numCovers(S)$.
\end{lemma}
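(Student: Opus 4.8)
The plan is to simply assemble the three ingredients that have already been laid out. Starting from inequality \eqref{eq:max-fourier}, we have
\[
\sum_{\calS_{d,u}} \Fourier{f}(S)^2 \leq (wd+1)\binom{u}{d}2^{2d}\,\max_{\calS_{d,u}}\abs{\Fourier{f}(S)},
\]
so it suffices to bound $\max_{\calS_{d,u}}\abs{\Fourier{f}(S)}$ by $(wd+1)2^{d}2^{-u}\max_{|S|=d}\numCovers(S)$. For any fixed $S \in \calS_{d,u}$, inequality \eqref{eq:recap-lemma} (i.e.\ \Cref{lemma:abs-fourier-u}) gives
\[
\abs{\Fourier{f}(S)} \leq (wd+1)\,\Pr_\xSc[\DT(\fS) = d \wedge u(S,\xSc) = u],
\]
and \Cref{lemma:abs-leq-num-covers} bounds that probability by $2^{-(u-d)}\numCovers(S)$. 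Chaining these two bounds yields $\abs{\Fourier{f}(S)} \leq (wd+1)2^{d}2^{-u}\numCovers(S) \leq (wd+1)2^{d}2^{-u}\max_{|S|=d}\numCovers(S)$, and plugging this into the displayed consequence of \eqref{eq:max-fourier} gives exactly the claimed bound $(wd+1)^2\binom{u}{d}2^{3d}2^{-u}\max_{|S|=d}\numCovers(S)$.

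The only mild subtlety is bookkeeping: one should make sure the $\max$ in \eqref{eq:max-fourier} is over $S \in \calS_{d,u}$ (so that \Cref{lemma:abs-fourier-u} and \Cref{lemma:abs-leq-num-covers} both apply, since both require $S \in \calS_{d,u}$), and then relax to $\max_{|S|=d}\numCovers(S)$ only at the very end to get the cleaner statement. I do not expect any genuine obstacle here — every nontrivial step (the Cauchy–Schwarz-type bound on the squared norm, the individual-coefficient bound, and the counting bound on covers) has already been established, so the lemma is essentially a one-line composition. If anything, the "hard part" is purely cosmetic: writing the chain of inequalities without dropping or duplicating a factor of $(wd+1)$ or $2^d$.
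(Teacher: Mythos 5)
Your proposal is correct and matches the paper's own argument exactly: the paper also obtains \Cref{lemma:twonorm-u} by combining \eqref{eq:max-fourier}, \eqref{eq:recap-lemma} (\Cref{lemma:abs-fourier-u}), and \Cref{lemma:abs-leq-num-covers}, which multiplies out to $(wd+1)^2\binom{u}{d}2^{3d}2^{-u}\max_{|S|=d}\numCovers(S)$. Your bookkeeping remark about keeping the maximum over $S \in \calS_{d,u}$ before relaxing to $\max_{|S|=d}$ is the right way to make the one-line composition precise.
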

Among the factors in front of the $\max$, $2^{-u}$ is the one that will dominate.
So if we can prove that $\max_{|S|=d}\numCovers(S) \ll 2^u$, that is, that
there are significantly less than $2^u$ ways to cover a set of $d$ variables by a union of terms of total size $u$, then we can show that $\calS_{d,u}$'s contribution to the Fourier weight is negligible.


\section{Proof of our main theorems}
\label{sec:using-read}
Let us summarize the approach. We approximate $f$ in three ways:
\begin{enumerate}
    \item First, \Cref{fact:low-degree} tells us that $f$ is $(\epsilon/3)$-concentrated on degree at most $\maxd$ (for some constant $C>1$). In particular, this shows that $f$ is $(\epsilon/3)$-concentrated on the union of the families $\calS_{d,u}$ for $d\leq \maxd$.
    \item Second, using \Cref{lemma:twonorm-u} we will show that among those families $\calS_{d,u}$, $f$ is $(\epsilon/3)$-concentrated on the families with $u\leq u^\ast$ for some $u^\ast$ (for \Cref{thm:improve}, we obtain $u^\ast=O(w \log k \log 1/\epsilon)$, and for \Cref{thm:small-read}, we obtain $u^\ast=O(w \log 1/\epsilon)$). In other words, we will prove
    \[
        \sum_{u = \floor{u^\ast}+1}^\infty\sum_{d =0}^{\floor{\maxd}} \sum_{S \in \calS_{d,u}} \Fourier{f}(S)^2 \leq \epsilon/3.
    \]
    \item Finally, using \Cref{lemma:onenorm-u}, we will show that the Fourier 1-norm for $u \leq u^\ast$ is at most some quantity $M$. In other words, we will prove
    \[
        \sum_{u=0}^{\floor{u^\ast}}\sum_{d =0}^{\floor{\maxd}} \sum_{S \in \calS_{d,u}} \abs{\Fourier{f}(S)} \leq M.
    \]
    By \Cref{fact:one-norm-implies-concentration}, this implies that the sum of the corresponding monomials is $(\epsilon/3)$-concentrated on $3M^2/\epsilon$ coefficients.
\end{enumerate}
Those three approximations together will show that the original function $f$ is $(\epsilon/3+\epsilon/3+\epsilon/3=\epsilon)$-concentrated on $3M^2/\epsilon$ coefficients.

To make our job in step 2 slightly easier in advance of proving \Cref{thm:improve} and \Cref{thm:small-read}, let us sum up, specialize and simplify \Cref{lemma:twonorm-u}.
\begin{corollary}
\label{cor:twonorm-u}
For large enough $w$, if $u \geq 100\maxd$, then
\[
    \sum_{d =0}^{\floor{\maxd}} \sum_{S \in \calS_{d,u}} \Fourier{f}(S)^2 \leq 2^{-u/2} \max_{|S|\leq \maxd}\numCovers(S).
\]
\end{corollary}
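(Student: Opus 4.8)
The plan is to sum the bound of \Cref{lemma:twonorm-u} over all $d$ in the relevant range and check that, once $u \ge 100\maxd$, the factor $2^{-u}$ swamps everything else. Write $D \coloneqq \maxd$. Since $\max_{|S|=d}\numCovers(S) \le \max_{|S|\le D}\numCovers(S)$ for every $d \le D$, \Cref{lemma:twonorm-u} already gives
\[
\sum_{d=0}^{\floor{D}}\sum_{S\in\calS_{d,u}}\Fourier{f}(S)^2 \;\le\; 2^{-u}\Big(\max_{|S|\le D}\numCovers(S)\Big)\sum_{d=0}^{\floor{D}}(wd+1)^2\binom{u}{d}2^{3d},
\]
so everything reduces to the purely numerical inequality $\sum_{d=0}^{\floor{D}}(wd+1)^2\binom{u}{d}2^{3d} \le 2^{u/2}$.

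To prove that, I would first note that since $u \ge 100D$, each of the factors $(wd+1)^2$, $2^{3d}$ and $\binom{u}{d}$ is nondecreasing in $d$ over $0\le d\le\floor{D}$ (for the binomial coefficient this uses $\floor{D}\le D\le u/2$), so the sum is at most $(\floor{D}+1)$ times its $d=\floor{D}$ term, hence at most $(D+1)(wD+1)^2 2^{3D}\binom{u}{\floor{D}}$. Bounding $\binom{u}{\floor{D}}\le (eu/\floor{D})^{\floor{D}}\le (eu/D)^{D}$ (using that $x\mapsto x\log(eu/x)$ is increasing on $(0,u]$), and taking base-$2$ logarithms, it remains to verify
\[
\log(D+1)+2\log(wD+1)+3D+D\log(eu/D)\;\le\;\tfrac{u}{2}.
\]

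The dominant term here is $D\log(eu/D)$: with $t\coloneqq u/D\ge 100$ we have $D\log(eu/D)=\tfrac{u}{t}\cdot\tfrac{1+\ln t}{\ln 2}$, and $t\mapsto\tfrac{1+\ln t}{t}$ is decreasing for $t\ge 1$, so $D\log(eu/D)<0.09\,u$; likewise $3D\le 0.03\,u$. For the two logarithmic terms we may assume $\epsilon<1$ (every Boolean function is $\epsilon$-concentrated on the empty family otherwise), so $\log(3/\epsilon)>1$, which forces $w\le D\le u/100$ and hence $\log(D+1)+2\log(wD+1)=O(\log u)$; since $u\ge 100\maxd$ grows with $w$, the hypothesis ``$w$ large enough'' makes this $O(\log u)$ term smaller than $0.3\,u$, and then the left-hand side is at most $(0.09+0.03+0.3)u<\tfrac{u}{2}$. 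The calculation itself is routine; the only thing that needs a little care is the bookkeeping around the parameter $\epsilon$ hidden inside $D=\maxd$ — one wants $\log(3/\epsilon)\ge1$ to relate $w$ and $D$, and the ``large enough $w$'' assumption is used only to absorb the lower-order logarithmic terms into the slack between $0.12\,u$ and $\tfrac{u}{2}$.
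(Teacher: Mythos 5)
Your proposal is correct and follows essentially the same route as the paper's own proof: sum \Cref{lemma:twonorm-u} over $d$, bound every summand by its value at $d=\floor{\maxd}$, estimate $\binom{u}{d}\le(eu/d)^d$, and use $u\ge 100\maxd$ together with ``$w$ large enough'' to absorb the polynomial and exponential prefactors into $2^{u/2}$. The only difference is that you spell out the ``ugly arithmetics'' (the explicit $0.09u+0.03u+0.3u<u/2$ bookkeeping) that the paper leaves implicit.
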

\begin{proof}
By \Cref{lemma:twonorm-u} and ugly arithmetics, for large enough $w$,
\begin{align*}
    \sum_{d =0}^{\floor{\maxd}} \sum_{S \in \calS_{d,u}} \Fourier{f}(S)^2
    &\leq \sum_{d =0}^{\floor{\maxd}} (wd+1)^2\binom{u}{d}2^{3d}2^{-u} \times\max_{|S|=d}\numCovers(S)\\
    &\leq \sum_{d =0}^{\floor{\maxd}} (Cw^2\log (3/\epsilon)) +1)^2\binom{u}{\maxd}2^{3\maxd}2^{-u}\\
    &\qquad\qquad\times\max_{|S|\leq \maxd}\numCovers(S)\\
    &\leq u^{O(1)} 2^{(\log(\frac{eu}{\maxd})+3)\maxd} 2^{-u} \times\max_{|S|\leq \maxd}\numCovers(S)\\
    &\leq 2^{-u/2} \max_{|S|\leq \maxd}\numCovers(S).\qedhere
\end{align*}
\end{proof}

\subsection{General improvement to Mansour's theorem}
How large can $\numCovers(S)$ get for $|S|=d$ if $f$ has read $k$?
In other words, how many ways are there to cover a set $S$ by terms of a read-$k$ DNF?
By \Cref{fact:cover-covers}(iv), each term in the cover must contain a variable of $S$, and each variable is present in at most $k$ terms, so there are at most $kd$ terms to choose from.
In addition, by \Cref{fact:cover-covers}(ii), the cover can contain at most $d$ terms, so
\[
\numCovers(S) \leq \sum_{i=0}^d \binom{kd}{i} \leq \binom{kd+d}{d} \leq (e(k+1))^d = O(k)^d.
\]

Now, looking at \Cref{cor:twonorm-u}, we see that we need to choose $u^\ast$ big enough that this is much smaller than $2^{u^\ast/2}$. Thus it suffices to pick $u^\ast$ to be about $\Theta(d \log k) = \Theta(Cw\log k \log 1/\epsilon)$.
The following lemma makes this precise
\begin{lemma}
\label{lemma:concentration-dlogk}
For $u^\ast = \uastOne$,
\[
\sum_{u=\floor{u^\ast}+1}^{\infty} \sum_{d=0}^{\floor{\maxd}} \sum_{S \in \calS_{d,u}} \Fourier{f}(S)^2
\leq \epsilon/3.
\]
\end{lemma}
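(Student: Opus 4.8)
The plan is to combine the summed two-norm bound of \Cref{cor:twonorm-u} with the counting estimate $\numCovers(S) \leq O(k)^d = O(k)^{\maxd}$ just derived, and then show that once $u$ exceeds $u^\ast = \uastOne$, each term is exponentially small in $u$ and the tail geometric series is at most $\epsilon/3$. First I would observe that $u^\ast = \uastOne \geq 100\maxd = \uastTwo$ for $k\geq 0$ (since $\log(k+2)\geq 1$), so that \Cref{cor:twonorm-u} applies to every $u \geq \floor{u^\ast}+1$ in the sum. Applying that corollary termwise gives
\[
\sum_{u=\floor{u^\ast}+1}^{\infty} \sum_{d=0}^{\floor{\maxd}} \sum_{S \in \calS_{d,u}} \Fourier{f}(S)^2
\leq \sum_{u=\floor{u^\ast}+1}^{\infty} 2^{-u/2} \max_{|S|\leq \maxd}\numCovers(S)
\leq \p*{\max_{|S|\leq \maxd}\numCovers(S)} \sum_{u=\floor{u^\ast}+1}^{\infty} 2^{-u/2}.
\]

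Next I would bound the geometric tail: $\sum_{u > \floor{u^\ast}} 2^{-u/2} \leq \frac{2^{-u^\ast/2}}{1-2^{-1/2}} \leq 4\cdot 2^{-u^\ast/2}$. Then I would plug in the covering count: since $|S|\leq\maxd$, \Cref{fact:cover-covers}(iv) together with read-$k$ gives $\numCovers(S) \leq (e(k+1))^{|S|} \leq (e(k+1))^{\maxd} = 2^{O(w\log k\log(1/\epsilon))}$, so it remains to check that
\[
4 \cdot (e(k+1))^{\maxd} \cdot 2^{-u^\ast/2} \leq \epsilon/3.
\]
Taking logarithms, the left side's exponent is $\maxd\cdot\log(e(k+1)) + O(1) - u^\ast/2$. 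With $u^\ast = \uastOne = 100\log(k+2)\cdot\maxd$, the subtracted term $u^\ast/2 = 50\log(k+2)\maxd$ dominates $\maxd\log(e(k+1)) \leq O(\log(k+2))\cdot\maxd$ by a large constant factor, and moreover the leftover slack is $\Omega(\maxd) = \Omega(Cw\log(1/\epsilon)) = \Omega(\log(3/\epsilon))$, which is more than enough to absorb the additive $\log(3/\epsilon)$ (from the $\epsilon/3$ on the right) and the constant $\log 4$. This is just the ``ugly arithmetic'' of the kind already invoked in \Cref{cor:twonorm-u}, so I would state it as a routine computation for $w$ large enough.

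The only real subtlety — and the step I would be most careful about — is making sure the constants in $u^\ast$ are genuinely large enough relative to the $\maxd = Cw\log(3/\epsilon)$ coming from \Cref{fact:low-degree}, since $C$ is a fixed but unspecified constant from Håstad's switching lemma. The factor $100$ in $\uastOne$ should be comfortably large: we need $50\log(k+2) > \log(e(k+1)) + 1$, which holds for all $k\geq 0$, and we need the residual $\maxd$-sized slack to swallow $\log(12/\epsilon)$; both are fine because $\maxd \geq \log(3/\epsilon)$ always. I do not anticipate any conceptual obstacle here — the lemma is essentially a packaging step that records the choice $u^\ast = \Theta(w\log k\log(1/\epsilon))$ dictated by balancing $2^{-u/2}$ against $\numCovers \leq O(k)^{\maxd}$ in \Cref{cor:twonorm-u}.
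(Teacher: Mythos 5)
Your proposal is correct and takes essentially the same route as the paper's proof: apply \Cref{cor:twonorm-u} (valid since every $u > u^\ast \geq \uastTwo$), bound $\numCovers(S) \leq (e(k+1))^{\maxd}$ via the read-$k$ counting argument, and let the geometric tail $\sum_u 2^{-u/2}$ with $u^\ast/2 = 50\log(k+2)\maxd$ absorb both the cover count and the target $\epsilon/3$. The only cosmetic difference is that the paper splits $2^{-u/2}$ as $2^{-u/4}\cdot 2^{-u/4}$ and cancels $(e(k+1))^{\maxd}$ against one factor before summing, which is the same arithmetic you carry out after factoring out the cover count.
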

\begin{proof}
By \Cref{cor:twonorm-u},
\begin{align*}
    \sum_{u=\floor{u^\ast}+1}^{\infty} \sum_{d=0}^{\floor{\maxd}} \sum_{S \in \calS_{d,u}} \Fourier{f}(S)^2
    &\leq \sum_{u=\floor{u^\ast}+1}^{\infty} 2^{-u/2} \max_{|S| \leq \maxd}\numCovers(S)\\
    &\leq \sum_{u=\floor{u^\ast}+1}^{\infty} 2^{-u/2} (e(k+1))^\maxd\\
    &\leq \sum_{u=\floor{u^\ast}+1}^{\infty} 2^{-u/4} 2^{-25Cw\log (k+2) \log (3/\epsilon)} 2^{\log(e(k+1))\maxd}\\
    &\leq \sum_{u=\floor{u^\ast}+1}^{\infty} 2^{-u/4}\\
    &\leq \frac{1}{1-2^{-1/4}} \times 2^{-25Cw \log(k+2) \log (3/\epsilon)}\\
    &\leq \epsilon/3.\qedhere
\end{align*}
\end{proof}

Now, all we have to do is to plug this value of $u^\ast$ into \Cref{lemma:onenorm-u} to get the following theorem.
\begin{theorem}[``width'' version of \Cref{thm:intro:improve}]
\label{thm:improve}
Let $f$ be a width-$w$, read-$k$ DNF.
Then $f$ is $\epsilon$-concentrated on $\log (k+2)^{O(w \log 1/\epsilon)}$ coefficients.
\end{theorem}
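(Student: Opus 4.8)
The plan is to instantiate the three-step template laid out at the start of Section~\ref{sec:using-read}, with the cutoff $u^\ast = \uastOne$ supplied by Lemma~\ref{lemma:concentration-dlogk}. First I would apply Fact~\ref{fact:low-degree} with error parameter $\epsilon/3$, so that $f$ is $(\epsilon/3)$-concentrated on the sets of size at most $\maxd$; since every such set of size exactly $d$ that has a nonzero Fourier coefficient lies in some family $\calS_{d,u}$, this already shows $f$ is $(\epsilon/3)$-concentrated on $\bigcup_{d\leq\maxd}\bigcup_u\calS_{d,u}$. Next, Lemma~\ref{lemma:concentration-dlogk} says exactly that the portion of this union with $u > u^\ast$ carries Fourier weight at most $\epsilon/3$, so $f$ is in fact $(2\epsilon/3)$-concentrated on $\calF \coloneqq \bigcup_{u\leq u^\ast}\bigcup_{d\leq\maxd}\calS_{d,u}$. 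It then remains to pull out of $\calF$ a small subfamily retaining all but $\epsilon/3$ of the weight on $\calF$, and for that I would use the Fourier $1$-norm bound of Lemma~\ref{lemma:onenorm-u}.

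The only real computation is to bound the $1$-norm
\[
M \coloneqq \sum_{u\leq u^\ast}\sum_{d\leq\maxd}\sum_{S\in\calS_{d,u}}\abs{\Fourier{f}(S)} \;\leq\; \sum_{u\leq u^\ast}\sum_{d\leq\maxd}(wd+1)\binom{u}{d}2^{2d},
\]
where the inequality is Lemma~\ref{lemma:onenorm-u} applied termwise (the sums run over integers $0\leq u\leq\floor{u^\ast}$, $0\leq d\leq\floor{\maxd}$). I would bound each summand by the largest: since $\maxd \ll u^\ast$, a routine check shows $\binom{u}{d}$ is maximized over this rectangle at $u=u^\ast$, $d=\maxd$, where $\binom{u^\ast}{\maxd}\leq(eu^\ast/\maxd)^{\maxd}=(100e\log(k+2))^{\maxd}$. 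Multiplying by $2^{2\maxd}$ and by the $\poly(w,\log(k+2),\log(1/\epsilon))$ prefactor contributed by the $(\floor{u^\ast}+1)(\floor{\maxd}+1)(w\maxd+1)$ terms, and using $\maxd = O(w\log(1/\epsilon))$, everything folds into the exponent (the prefactor and the constants are absorbed because $\log(k+2)\geq 1$), giving $M \leq (\log(k+2))^{O(w\log(1/\epsilon))}$. If $\epsilon\geq1$ there is nothing to prove since $\sum_S\Fourier{f}(S)^2=\E[f^2]\leq1$; otherwise $\log(3/\epsilon)=O(\log(1/\epsilon))$, and the case $k\leq1$ is anyway trivial.

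Finally, $\sum_{S\in\calF}\Fourier{f}(S)\,x^S$ has Fourier $1$-norm at most $M$, so Fact~\ref{fact:one-norm-implies-concentration} applied with error $\epsilon/3$ produces a subfamily $\calS\subseteq\calF$ with $\abs{\calS}\leq 3M^2/\epsilon$ on which it is $(\epsilon/3)$-concentrated. Summing the three discarded pieces, $\sum_{S\notin\calS}\Fourier{f}(S)^2\leq\epsilon/3+\epsilon/3+\epsilon/3=\epsilon$, so $f$ is $\epsilon$-concentrated on $\calS$, and $3M^2/\epsilon = (\log(k+2))^{O(w\log(1/\epsilon))}$ since the extra factor $3/\epsilon=2^{O(\log(1/\epsilon))}$ is likewise absorbable into the exponent. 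I do not anticipate a genuine obstacle: all the conceptual content (the specialized $1$- and $2$-norm bounds and the $\numCovers$ estimate) is already in place, so what remains is bookkeeping, and the main thing to get right is the arithmetic showing the dependence on $\epsilon$ collapses; the points requiring care are the floor functions, the ``large enough $w$'' caveats inherited from Corollary~\ref{cor:twonorm-u}, and the small-$\epsilon$ normalization needed to fold the $\log(3/\epsilon)$ appearing in $\maxd$, the polynomial factors, and $3/\epsilon$ all into $(\log(k+2))^{O(w\log(1/\epsilon))}$.
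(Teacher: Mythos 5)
Your proposal is correct and follows essentially the same route as the paper's proof: degree cutoff via Fact~\ref{fact:low-degree}, the $u>u^\ast$ cutoff via Lemma~\ref{lemma:concentration-dlogk} with $u^\ast=\uastOne$, a corner-of-the-rectangle bound on the summed $1$-norm from Lemma~\ref{lemma:onenorm-u} giving $M=(\log(k+2))^{O(w\log 1/\epsilon)}$, and Fact~\ref{fact:one-norm-implies-concentration} with error $\epsilon/3$ to conclude. The only differences are cosmetic (your $(100e\log(k+2))^{\maxd}$ versus the paper's $(400e\log(k+2))^{\maxd}$ bookkeeping, and your explicit remarks about floors, small $\epsilon$, and the ``large enough $w$'' caveat, which the paper leaves implicit).
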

\begin{proof}
By \Cref{fact:low-degree}, $f$ is $\epsilon/3$-concentrated up to degree $\maxd$, and by \Cref{lemma:concentration-dlogk}, the coefficients in $\calS_{d,u}$ for $d\leq \maxd$ and $u > u^\ast = \uastOne$ also only amount to Fourier weight at most $\epsilon/3$.
In addition, by \Cref{lemma:onenorm-u} the remaining coefficients have total 1-norm at most
\begin{align*}
    \sum_{u=0}^{\floor{u^\ast}} \sum_{d=0}^{\floor{\maxd}} \sum_{S \in \calS_{d,u}}\abs{\Fourier{f}(S)}
    &\leq \sum_{u=0}^{\floor{u^\ast}} \sum_{d=0}^{\floor{\maxd}} (wd+1)\binom{u}{d} 2^{2d}\\
    &\leq (u^\ast+1) \sum_{d=0}^{\floor{\maxd}} (wd+1)\binom{u^\ast}{d} 2^{2d}\\
    &\leq (u^\ast+1) (\maxd+1)(Cw^2\log (3/\epsilon) +1)\\
    & \ \ \ \  \times \binom{\uastOne}{\maxd} 2^{2\maxd}\\
    &\leq (u^\ast+1)(w \log 1/\epsilon)^{O(1)} \p{400e \log (k+2)}^{\maxd}\\
    &= \log (k+2)^{O(w \log 1/\epsilon)}.
\end{align*}
Therefore, using \Cref{fact:one-norm-implies-concentration} with error $\epsilon/3$, $f$ is $\epsilon$-concentrated on $3\p*{\log (k+2)^{O(w \log 1/\epsilon)}}^2/\epsilon = \log (k+2)^{O(w \log 1/\epsilon)}$ coefficients.
\end{proof}

\subsection{A proof of Mansour's conjecture for small enough read}
In the previous subsection, we bounded the number of covers of $S$ by $O(k)^d$, which was only small enough when $O(k)^d < 2^{u/4} \Leftrightarrow u = O(d \log k)$.
If we want to prove Mansour's conjecture, we need to do better: we need to bound the number of covers by $2^{u/4}$ for any $u = \omega(d)$.

The way to achieve this is to bound the number of terms that can be involved in the cover.
In the previous subsection, we simply observed that the cover is made of at most $|S|=d$ terms among the $kd$ terms that contain variables of $S$.
But when the read is small, we can do better.

To build some intuition, suppose that every term involves \emph{exactly} $w$ variables, rather than at most $w$.
Since every variable can only occur in at most $k$ terms, this would mean (by double counting) that any union of $l$ terms has total size at least $lw/k$.
Therefore, if we want the union to have size $u$, there can only be $ku/w$ terms in it.
Thus there would be at most
\[
    \binom{kd}{ku/w} \leq \binom{ku}{ku/w} \leq (ew)^{ku/w} = 2^{ku\log(ew)/w}
\]
ways to cover $S$ with total size $u$.
As long as $k \leq \frac{w}{4\log(ew)}$, this is at most $2^{u/4}$, and we easily get the following theorem.
\begin{theorem}
\label{thm:exactly-w}
Let $f$ be a DNF whose terms are all conjunctions of exactly $w$ variables, and suppose $f$ has read $k \leq \frac{w}{4\log(ew)}$. Then $f$ is $\epsilon$-concentrated on $2^{O(w \log 1/\epsilon)}$ coefficients.
\end{theorem}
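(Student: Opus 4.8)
The plan is to run the three-step argument of \Cref{sec:using-read} with cutoff $u^\ast = \uastTwo$; the only step that is not already in place is a bound on $\max_{|S|\le\maxd}\numCovers(S)$ that exploits the hypothesis that every term has width \emph{exactly} $w$. Step~1 is immediate: by \Cref{fact:low-degree}, $f$ is $(\epsilon/3)$-concentrated up to degree $\maxd$, hence on $\bigcup_{d\le\maxd}\bigcup_u \calS_{d,u}$.

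For the key new ingredient, I would fix $S$ with $|S| = d \le \maxd$ and let $\calT$ be any cover counted by $\numCovers(S)$, so that $\abs*{\bigcup_{j\in\calT}T_j} = u$ and $l := \abs{\calT}$. Double-counting the incidences between the terms of $\calT$ and the variables of $\bigcup_{j\in\calT}T_j$: the number of incidences equals $lw$ (each term has exactly $w$ variables) and is at most $ku$ (each of the $u$ variables lies in at most $k$ terms), so $l \le ku/w$. Since by \Cref{fact:cover-covers}(iv) every term of a cover contains a variable of $S$, the set $\calT$ is chosen from the at most $kd$ terms touching $S$; hence, in the relevant range $\uastTwo \le u \le wd$,
\[
\numCovers(S) \ \le\ \binom{kd + ku/w}{ku/w} \ \le\ (ew)^{ku/w} \ =\ 2^{(k\log(ew)/w)\,u} \ \le\ 2^{u/4},
\]
where the first inequality counts the ways to pick at most $ku/w$ of the $\le kd$ relevant terms, and the last step uses $k \le w/(4\log(ew))$. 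I expect this to be the main obstacle, and the crucial point is that it relies on ``exactly $w$'' and not merely ``at most $w$'': with short terms permitted, a small union could be covered using many terms and the incidence bound $lw \le ku$ would break.

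The remainder is the bookkeeping already laid out in \Cref{sec:using-read}. Plugging $\max_{|S|\le\maxd}\numCovers(S) \le 2^{u/4}$ into \Cref{cor:twonorm-u} gives, for every $u \ge \uastTwo$, $\sum_{d\le\maxd}\sum_{S\in\calS_{d,u}}\Fourier{f}(S)^2 \le 2^{-u/2}\cdot 2^{u/4} = 2^{-u/4}$; summing this geometric series over $u > u^\ast := \uastTwo$ yields at most $\frac{1}{1-2^{-1/4}}\,2^{-25\maxd} \le \epsilon/3$, so $f$ is $(\epsilon/3)$-concentrated on $\bigcup_{d\le\maxd}\bigcup_{u\le u^\ast}\calS_{d,u}$. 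For the surviving families, \Cref{lemma:onenorm-u} bounds the Fourier 1-norm by
\[
M \ :=\ (u^\ast+1)(\maxd+1)(w\maxd+1)\binom{u^\ast}{\maxd}2^{2\maxd},
\]
and since $u^\ast, \maxd = O(w\log 1/\epsilon)$ we get $\binom{u^\ast}{\maxd} \le 2^{u^\ast} = 2^{O(w\log 1/\epsilon)}$, hence $M = 2^{O(w\log 1/\epsilon)}$. Applying \Cref{fact:one-norm-implies-concentration} with error $\epsilon/3$, these monomials are $(\epsilon/3)$-concentrated on $3M^2/\epsilon = 2^{O(w\log 1/\epsilon)}$ coefficients. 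Combining the three $(\epsilon/3)$-approximations shows that $f$ is $\epsilon$-concentrated on $2^{O(w\log 1/\epsilon)}$ coefficients, which is \Cref{thm:exactly-w}.
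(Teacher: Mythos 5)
Your proposal is correct and follows essentially the same route as the paper: the double-counting of term--variable incidences (using exact width $w$ and read $k$ to force any size-$u$ union to use at most $ku/w$ terms), the resulting bound $\numCovers(S)\leq (ew)^{ku/w}\leq 2^{u/4}$, the cutoff $u^\ast=\uastTwo$ plugged into \Cref{cor:twonorm-u}, and the 1-norm bound from \Cref{lemma:onenorm-u} combined with \Cref{fact:one-norm-implies-concentration}. The only differences are cosmetic (e.g.\ bounding the number of covers by $\binom{kd+ku/w}{ku/w}$ rather than $\binom{kd}{ku/w}\leq\binom{ku}{ku/w}$), and the bookkeeping matches the paper's.
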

\begin{proof}
By \Cref{fact:low-degree}, $f$ is $\epsilon/3$-concentrated up to degree $\maxd$.
Let $u^\ast = 100\maxd$. Then, by \Cref{cor:twonorm-u},
\begin{align*}
    \sum_{u=\floor{u^\ast}+1}^\infty \sum_{d=0}^{\floor{\maxd}} \sum_{S \in \calS_{d,u}} \Fourier{f}(S)^2
    &\leq \sum_{u=\floor{u^\ast}+1}^\infty 2^{-u/2} \max_{|S|\leq\maxd}\numCovers(S)\\
    &\leq \sum_{u=\floor{u^\ast}+1}^\infty 2^{-u/2} \binom{k\maxd}{ku/w}\\
    &\leq \sum_{u=\floor{u^\ast}+1}^\infty 2^{-u/2} \binom{ku}{ku/w}\\
    &\leq \sum_{u=\floor{u^\ast}+1}^\infty 2^{-u/2} 2^{ku\log(ew)/w}\\
    &\leq \sum_{u=\floor{u^\ast}+1}^\infty 2^{-u/4}\\
    &\leq \frac{1}{1-2^{-1/4}} \times 2^{-25Cw \log 1/\epsilon}\\
    &\leq \epsilon/3.
\end{align*}
In addition, by \Cref{lemma:onenorm-u}, the remaining coefficients have total 1-norm at most
\begin{align*}
    \sum_{u=0}^{\floor{u^\ast}} \sum_{d=0}^{\floor{\maxd}} \sum_{S \in \calS_{d,u}}\abs{\Fourier{f}(S)}
    &\leq \sum_{u=0}^{\floor{u^\ast}} \sum_{d=0}^{\floor{\maxd}} (wd+1)\binom{u}{d} 2^{2d}\\
    &\leq (u^\ast+1) (\maxd+1)(Cw^2\log (3/\epsilon) +1) 2^{u^\ast} 2^{2\maxd}\\
    &= 2^{O(w \log 1/\epsilon)}.
\end{align*}
Therefore, using \Cref{fact:one-norm-implies-concentration} with error $\epsilon/3$, $f$ is $\epsilon$-concentrated of $3\p*{2^{O(w \log 1/\epsilon)}}^2/\epsilon = 2^{O(w \log 1/\epsilon)}$ coefficients.
\end{proof}

However, this reasoning breaks down if some terms of $f$ are allowed to have width smaller than $w$.
For example, if $f$ contained the one-variable term $x_i$ for each $i \in S$, then the union could after all contain as many as $d$ terms, rather than $ku/w$.

We get out of this issue (though not without some loss) by using the following lemma from \cite{ST19}, which tells us in essence that $f$
cannot contain too many short terms without being very biased.
\begin{fact}[Lemma 1.1 in \cite{ST19}, rephrased]
\label{fact:st}
Let $f = T_1 \vee \cdots \vee T_s$ be a read-$k$ DNF. Then
\[
\sum_{j=1}^s 2^{-|T_j|} \leq k \ln \p*{\frac{1}{1-\Pr_x[f(x)]}}.
\]
\end{fact}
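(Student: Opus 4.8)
One option is simply to invoke \cite{ST19}; here is how I would prove it from scratch. Write $q = \Pr_x[f(x) = 0] = 1 - \Pr_x[f(x)]$, and for each term let $E_j$ be the event ``$T_j$ is satisfied'', so $\Pr[E_j] = 2^{-|T_j|}$ and $\sum_j 2^{-|T_j|} = \E[\#\{\text{satisfied terms}\}]$. Fix any ordering of the terms, apply the chain rule to $q = \Pr[\bigcap_j \overline{E_j}]$, and then use $-\ln(1-t) \ge t$:
\[
\ln\frac1q \;=\; \sum_{j=1}^s \Big(-\ln \Pr[\,\overline{E_j}\mid \overline{E_1},\dots,\overline{E_{j-1}}\,]\Big)\;\ge\;\sum_{j=1}^s \Pr[\,E_j\mid \overline{E_1},\dots,\overline{E_{j-1}}\,].
\]
So it suffices to prove the purely probabilistic inequality $\sum_{j=1}^s 2^{-|T_j|} \le k\sum_{j=1}^s \Pr[\,E_j\mid \overline{E_1},\dots,\overline{E_{j-1}}\,]$.

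The role of read-$k$ is that $E_j$ depends only on the variables of $T_j$, and each of those variables lies in at most $k-1$ other terms, so conditioning on ``all earlier terms falsified'' can suppress $\Pr[E_j]=2^{-|T_j|}$ only by an amount that, in aggregate, is controlled by $k$. The subtlety — and the reason the inequality cannot be proved coefficient by coefficient — is that the factor-$k$ loss must be amortized over the whole sum: for the read-$k$ DNF consisting of $x_1\wedge\cdots\wedge x_w$ together with $x_1\wedge y_1,\dots,x_1\wedge y_{k-1}$, one checks that $\Pr[\,E_j\mid \overline{E_{<j}}\,]$ falls well below $\tfrac1k 2^{-|T_j|}$ for the later terms $T_j$, even though the summed inequality holds comfortably. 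I would carry out the amortization by processing the terms in order of decreasing width and charging, for each variable $v$, the at most $k$ terms through $v$ collectively against a single event in a suitable exposure of the variables — so that summing over $v$ collapses the $\le k$-fold overcount into the stated factor $k$. An alternative is an induction on the number of terms using $\sum_j 2^{-|T_j|}\le k\ln\frac1{q}$ as the invariant, reinforced with an overlap-dependent slack term, removing a widest term at each step. Either way, the base case $k=1$ is immediate: a read-once DNF has pairwise variable-disjoint terms, so $q=\prod_j(1-2^{-|T_j|})$ and $\ln\frac1q=\sum_j\ln\frac1{1-2^{-|T_j|}}\ge\sum_j 2^{-|T_j|}$.

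The main obstacle is precisely this global accounting step; the rest (the chain rule, $-\ln(1-t)\ge t$, and routine arithmetic) is straightforward.
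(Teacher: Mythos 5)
The paper does not actually prove this statement: it is imported, rephrased, from Lemma~1.1 of \cite{ST19}, so your ``option one'' --- simply invoking \cite{ST19} --- is exactly what the paper does, and in context that is all that is required.

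Your from-scratch sketch, however, has a genuine gap at precisely the step that makes the lemma nontrivial. The chain rule together with $-\ln(1-t)\ge t$ is fine, but it only reduces the fact to the inequality $\sum_{j=1}^s 2^{-|T_j|} \le k\sum_{j=1}^s \Pr[E_j \mid \overline{E_1},\dots,\overline{E_{j-1}}]$, and that inequality is where all the content lies. Note moreover that it is a \emph{strengthening} of the stated fact, since the chain-rule sum can be strictly smaller than $\ln(1/q)$ (the loss in $-\ln(1-t)\ge t$ is real whenever conditional satisfaction probabilities are not small); so its truth, for whatever ordering of the terms you fix, does not follow from the fact itself and would need an independent proof. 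The two routes you offer for the ``global accounting'' --- a charging scheme over variables in a ``suitable exposure'', or an induction on the number of terms ``reinforced with an overlap-dependent slack term'' --- are descriptions of hoped-for arguments rather than arguments: no charging rule is specified, no inductive invariant is written down, and you yourself identify this amortization as the main obstacle. (Your correct observations --- that the bound fails coefficient-by-coefficient on the example $x_1\wedge\cdots\wedge x_w$ together with $x_1\wedge y_1,\dots,x_1\wedge y_{k-1}$, and that the read-once case follows from independence of disjoint terms --- do not substitute for it.) So as a standalone proof the proposal is incomplete; either cite \cite{ST19} as the paper does, or the amortization step must be carried out in full.
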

For the purposes of $\epsilon$-approximation, we can thus assume that
\[
\sum_{j=1}^s 2^{-|T_j|} \leq k \ln 1/\epsilon
\]
(otherwise, we can simply approximate $f$ by the constant $1$ function).

We can now show that the union will be made of few terms by proving the following combinatorial lemma.
\begin{lemma}
\label{lemma:combinatorial}
Let $A_1, \ldots, A_l$ be a family of finite sets such that
\begin{enumerate}[(i)]
    \item $|A_1| + \cdots + |A_l| \leq v$;
    \item $2^{-|A_1|} + \cdots + 2^{-|A_l|} \leq F$,
\end{enumerate}
with $v>F$.
Then $l \leq \frac{4v}{\log(v/F)}$.
\end{lemma}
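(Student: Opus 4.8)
The plan is to split the family according to a size threshold $t>0$, using hypothesis (ii) to control the sets that are \emph{small} (i.e.\ $|A_i|\le t$) and hypothesis (i) to control the sets that are \emph{large} (i.e.\ $|A_i|>t$), and then to optimize over $t$. Write $n_i=|A_i|$. Every small set contributes at least $2^{-t}$ to $\sum_i 2^{-n_i}\le F$, so there are at most $F\,2^{t}$ of them; every large set contributes more than $t$ to $\sum_i n_i\le v$, so there are fewer than $v/t$ of them. Adding the two counts yields
\[
l \;\le\; F\,2^{t} + \frac{v}{t}\qquad\text{for every }t>0.
\]

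It then remains to pick $t$ well. I would take $t=\tfrac12\log(v/F)$, which is positive since $v>F$. With this choice $F\,2^{t}=F\cdot(v/F)^{1/2}=\sqrt{vF}$, and the elementary inequality $\tfrac12\log x\le\sqrt{x}$ for $x\ge 1$ (applied with $x=v/F$) rearranges exactly to $\sqrt{vF}\le 2v/\log(v/F)=v/t$. Substituting back,
\[
l \;\le\; F\,2^{t} + \frac{v}{t} \;\le\; \frac{v}{t} + \frac{v}{t} \;=\; \frac{4v}{\log(v/F)},
\]
which is the claimed bound.

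The only mildly delicate ingredient is the scalar inequality $\tfrac12\log x\le\sqrt x$ for $x\ge 1$, which is standard: the function $x\mapsto \sqrt x-\tfrac12\log x$ is nonnegative at $x=1$ and a one-line derivative check shows it has no interior minimum below $0$; alternatively any slack here could simply be absorbed into the constant $4$. Everything else is bookkeeping with the two counting estimates, so I do not anticipate a genuine obstacle — the main ``idea'' is just the small/large split and the balanced choice of $t$.
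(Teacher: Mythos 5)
Your proof is correct, and it is essentially the paper's own argument in a different wrapper: your threshold $t=\tfrac12\log(v/F)$ is exactly the paper's split (each $A_r$ has either $|A_r|\ge \tfrac12\log(v/F)$ or $2^{-|A_r|}\ge\sqrt{F/v}$), and both proofs close via the same scalar inequality $\tfrac12\log x\le\sqrt{x}$ to reach $l\le \frac{4v}{\log(v/F)}$. No gap to report.
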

\begin{proof}
Intuitively, the two constraints are in direct tension: if we want to keep $|A_r|$ small, this makes $2^{-|A_r|}$ big, and vice versa. 
So we will show that each set $A_r$ uses up a $\frac{\log(v/F)}{2v}$ fraction of the ``budget'' for either sum (i) or sum (ii).
Concretely, for any $A_r$, either $|A_r| \geq \log(v/F)/2 \geq \p*{\frac{\log(v/F)}{2v}}v$, or
\[
    2^{-|A_r|}
    \geq \sqrt{\frac{F}{v}}
    \geq \frac{F}{v}\times \frac{\log(v/F)}{2}
    = \p*{\frac{\log(v/F)}{2v}}F,
\]
where the second inequality comes from the fact that $\sqrt{x} \geq \frac{\log x}{2}$ for $x>0$, applied to $x=v/F$.
Both of those cases can only happen $\frac{2v}{\log(v/F)}$ times without violating either (i) or (ii), which means there can only be $l \leq \frac{4v}{\log(v/F)}$ sets in the family.
\end{proof}

Since $f$ is read-$k$, if a union of terms has size $u$, then the sum of the size of its terms is at most $ku$.
In addition, by \Cref{fact:st}, the terms $T_{j_1}, \ldots, T_{j_l}$ forming the union must obey
\[
2^{-|T_{j_1}|} + \cdots + 2^{-|T_{j_l}|} \leq \sum_{j=1}^s 2^{-|T_j|} \leq k\ln 1/\eps.
\]
Therefore, we can apply \Cref{lemma:combinatorial} with $v=ku$ and $F=k\ln 1/\epsilon$ to show that there are at most
\[
\binom{kd}{\frac{4ku}{\log(u/\ln(1/\epsilon))}}
\leq \binom{ku}{\frac{4ku}{\log(u/\ln(1/\epsilon))}}
\leq \p*{\frac{e\log(u/\ln(1/\epsilon))}{4}}^{\frac{4ku}{\log(u/\ln(1/\epsilon))}}
\leq 2^{ku\frac{4\log\log(u/\ln(1/\epsilon)))}{\log(u/\ln(1/\epsilon))}}
\]
ways to cover $S$ by a union of terms of total size $u$.

As long as $k \leq \frac{\log(u/\ln(1/\epsilon))}{16\log \log (u/\ln(1/\epsilon))}$ for the smallest value of $u$ we have to consider, this is at most $2^{u/4}$.
We will set $u^\ast \coloneqq \uastTwo$, so the smallest value we have to consider is
\[
\floor{u^\ast}+1 \geq \uastTwo \geq w \ln(1/\epsilon).
\]
Therefore, $k \leq \frac{\log w}{16\log \log w}$ suffices, and we get the following theorem.
\begin{theorem}[``width'' version of \Cref{thm:intro:small-read}]
\label{thm:small-read}
Let $f$ be a width-$w$ DNF that has read $k \leq \frac{\log w}{16\log \log w}$. Then $f$ is $\epsilon$-concentrated on $2^{O(w \log 1/\epsilon)}$ coefficients.
\end{theorem}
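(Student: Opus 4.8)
The plan is to instantiate the three-step template laid out at the start of \Cref{sec:using-read}, using for $\numCovers(S)$ the bound that comes from combining the read-$k$ hypothesis with \Cref{fact:st} and \Cref{lemma:combinatorial}. First I would apply \Cref{fact:low-degree} to get that $f$ is $\epsilon/3$-concentrated up to degree $\maxd$, so it suffices to control the families $\calS_{d,u}$ with $d\le\maxd$. I would also assume without loss of generality that $\sum_j 2^{-|T_j|}\le k\ln(1/\epsilon)$: otherwise \Cref{fact:st} forces $\Pr_x[f(x)]$ to be within $\epsilon$ of $1$, so the constant function $1$ is an $\epsilon$-approximator and is trivially concentrated on a single coefficient.

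Next I would set $u^\ast\coloneqq\uastTwo$ and, for each $S$ with $|S|=d\le\maxd$, bound $\numCovers(S)$ as follows. By \Cref{fact:cover-covers}(iv) every term in a cover meets $S$, so there are at most $kd$ candidate terms; moreover, any union of terms whose total size is $u$ has term-size sum at most $ku$ (each variable is read at most $k$ times), and by \Cref{fact:st} the terms $T_{j_1},\dots,T_{j_l}$ of such a union satisfy $\sum_r 2^{-|T_{j_r}|}\le k\ln(1/\epsilon)$. Applying \Cref{lemma:combinatorial} with $v=ku$ and $F=k\ln(1/\epsilon)$ bounds the number of terms in the cover by $4ku/\log(u/\ln(1/\epsilon))$, so $\numCovers(S)\le\binom{kd}{4ku/\log(u/\ln(1/\epsilon))}\le 2^{4ku\log\log(u/\ln(1/\epsilon))/\log(u/\ln(1/\epsilon))}$.

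The one step requiring genuine care — and the only real obstacle — is checking that this quantity is at most $2^{u/4}$ for every $u\ge u^\ast$. Since $\floor{u^\ast}+1\ge\uastTwo\ge w\ln(1/\epsilon)$ (using $100C>1$ and $\log(3/\epsilon)>\ln(1/\epsilon)$), we have $u/\ln(1/\epsilon)\ge w$ for all relevant $u$, and because $t\mapsto t/\log t$ is increasing on $t\ge 2$ it is enough that $k\le\frac{\log w}{16\log\log w}$, which is exactly the hypothesis. (The side condition $v>F$ of \Cref{lemma:combinatorial}, i.e. $u>\ln(1/\epsilon)$, also holds for these $u$.) With $\numCovers(S)\le 2^{u/4}$, \Cref{cor:twonorm-u} gives $\sum_{u>u^\ast}\sum_{d\le\maxd}\sum_{S\in\calS_{d,u}}\Fourier{f}(S)^2\le\sum_{u>u^\ast}2^{-u/4}\le\epsilon/3$.

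Finally, I would use \Cref{lemma:onenorm-u} to bound the $1$-norm of the surviving coefficients, those with $u\le u^\ast$ and $d\le\maxd$, by $(u^\ast+1)(\maxd+1)(Cw^2\log(3/\epsilon)+1)\,2^{u^\ast}2^{2\maxd}=2^{O(w\log 1/\epsilon)}$, and then \Cref{fact:one-norm-implies-concentration} with error $\epsilon/3$ upgrades this to $\epsilon$-concentration on $2^{O(w\log 1/\epsilon)}$ coefficients, exactly as in the proof of \Cref{thm:exactly-w}. So the argument is essentially a routine assembly of the lemmas; the only thing to watch is keeping $\maxd$, $u^\ast$, $k$, and $\ln(1/\epsilon)$ in the right relationship so that the combinatorial cover count actually beats $2^{u/4}$ at the cutoff.
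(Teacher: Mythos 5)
Your proposal is correct and follows essentially the same route as the paper's own proof: the same cutoff $u^\ast=\uastTwo$, the same reduction via \Cref{fact:st} (discarding the case where the constant $1$ approximates $f$), the same application of \Cref{lemma:combinatorial} with $v=ku$ and $F=k\ln(1/\epsilon)$ to bound $\numCovers(S)$ by $2^{u/4}$, and the same assembly through \Cref{cor:twonorm-u}, \Cref{lemma:onenorm-u}, and \Cref{fact:one-norm-implies-concentration}. The only quibble is that $t\mapsto t/\log t$ is increasing only for $t\ge e$ rather than $t\ge 2$, which is harmless here since the paper assumes $w$ large enough; you are in fact slightly more careful than the paper in checking the side condition $v>F$.
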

\begin{proof}
The proof is very similar to the proof of \Cref{thm:exactly-w}.
By \Cref{fact:low-degree}, $f$ is $\epsilon/3$-concentrated up to degree $\maxd$.
Let $u^\ast \coloneqq \uastTwo$. Then, by \Cref{cor:twonorm-u},
\begin{align*}
    \sum_{u=\floor{u^\ast}+1}^\infty \sum_{d=0}^{\floor{\maxd}} \sum_{S \in \calS_{d,u}} \Fourier{f}(S)^2
    &\leq \sum_{u=\floor{u^\ast}+1}^\infty 2^{-u/2} \max_{|S|\leq\maxd}\numCovers(S)\\
    &\leq \sum_{u=\floor{u^\ast}+1}^\infty 2^{-u/2} \binom{k\maxd}{\frac{4ku}{\log(u/\ln(1/\epsilon))}}\\
    &\leq \sum_{u=\floor{u^\ast}+1}^\infty 2^{-u/2} \binom{ku}{\frac{4ku}{\log(u/\ln(1/\epsilon))}}\\
    &\leq \sum_{u=\floor{u^\ast}+1}^\infty 2^{-u/2} 2^{ku\frac{4\log\log(u/\ln(1/\epsilon)))}{\log(u/\ln(1/\epsilon))}}\\
    &\leq \sum_{u=\floor{u^\ast}+1}^\infty 2^{-u/2} 2^{ku\frac{4\log\log(u^\ast/\ln(1/\epsilon)))}{\log(u^\ast/\ln(1/\epsilon))}}\\
    &\leq \sum_{u=\floor{u^\ast}+1}^\infty 2^{-u/4}\\
    &\leq \frac{1}{1-2^{-1/4}} \times 2^{-25Cw \log (3/\epsilon)}\\
    &\leq \epsilon/3.
\end{align*}
In addition, by \Cref{lemma:onenorm-u}, the remaining coefficients have total 1-norm at most
\begin{align*}
    \sum_{u=0}^{\floor{u^\ast}} \sum_{d=0}^{\floor{\maxd}} \sum_{S \in \calS_{d,u}}\abs{\Fourier{f}(S)}
    &\leq \sum_{u=0}^{\floor{u^\ast}} \sum_{d=0}^{\floor{\maxd}} (wd+1)\binom{u}{d} 2^{2d}\\
    &\leq (u^\ast+1) (\maxd+1)(Cw^2\log (3/\epsilon) +1) 2^{u^\ast} 2^{2\maxd}\\
    &= 2^{O(w \log 1/\epsilon)}.
\end{align*}
Therefore, using \Cref{fact:one-norm-implies-concentration} with error $\epsilon/3$, $f$ is $\epsilon$-concentrated of $3\p*{2^{O(w \log 1/\epsilon)}}^2/\epsilon = 2^{O(w \log 1/\epsilon)}$ coefficients.
\end{proof}


\section{Conclusion}
In this section, we present some open problems, and a direction that the results in this paper suggest.

\subsection{Open problems}
Besides the obvious open problem which is to prove Mansour's conjecture, we see two ways one could extend the results in our paper.

The first would be to improve the dependence on $k$ in~\Cref{thm:intro:improve}. After all, if three exponential improvements were possible starting from~\cite{KLW10}, how about a fourth? In fact, any significant improvement over the current dependence on $k$ would strictly improve Mansour's theorem even for general DNFs. Indeed, given that $k \leq s$, improving from $s^{O(\log \log k)}$ to, say, $s^{O(\log \log \log k)}$ would improve Mansour's theorem from $s^{O(\log \log s)}$ to $s^{O(\log \log \log s)}$.

The second (and perhaps easier) option would be to prove Mansour's conjecture for a bigger range of reads, improving on~\Cref{thm:intro:small-read}. Indeed, in~\Cref{thm:exactly-w}, we showed that Mansour's conjecture holds for $k$ up to $\Omega(w / \log w)$ instead of $\Omega(\log w / \log\log w)$ if all terms have exactly $w$ variables, instead of just at most $w$. To us, it intuitively feels like width exactly $w$ is the ``hardest case'', and it is hard to see how having shorter terms should not help a DNF have a much more spread-out Fourier spectrum, but we have not been able to make this intuition formal. In addition, our argument in~\Cref{thm:intro:small-read} does not feel tight: the way we use \cite{ST19}'s lemma (\Cref{fact:st}) feels ``wasteful'', since we apply it to only the very few terms that are involved in covering some set $S$, rather than to the entire DNF. Because of this, we conjecture that \Cref{thm:intro:small-read} can be improved with similar techniques to handle reads up to $\tilde{\Omega}(w) = \tilde{\Omega}(\log s)$ rather than the current $\tilde{\Omega}(\log \log s)$.

\subsection{A structure vs pseudorandomness approach to Mansour's conjecture?}
A recent trend in solving hard combinatorics problems has been the ``structure vs pseudorandomness'' paradigm, which consists in decomposing an object into a ``structured'' part and a ``pseudorandom'' part, where ``pseudorandom'' can mean stand for property that a randomly drawn object would typically have.
In particular, this paradigm has recently been used by Alweiss, Lovett, Wu and Zhang \cite{ALWZ20} to improve bounds on the sunflower lemma from $w^{O(w)}$ to $(\log w)^{O(w)}$.

We think that a similar argument can be applied to Mansour's conjecture.
In fact, our techniques (and in particular, \Cref{lemma:onenorm-u}) suggest a natural candidate for what it means to a DNF to be ``pseudorandom'': $f$ is pseudorandom if for any set of variables $S$, there are few ways to cover $S$ minimally using terms of $f$.\footnote{perhaps weighting covers with a factor $2^{-u}$ where $u$ is the size of the cover's union}
Indeed, random DNFs where $\poly(n)$ terms of size $\Theta(\log n)$ are drawn at random have this property (which gives an alternate proof of \cite{KLW10}'s results about random DNFs).

However, some work remains to be done in order to deal with the ``structured'' case.
Perhaps the most difficult case that remains unsolved is the random DNF where $n$ terms of size $\log n$ are drawn at random \emph{from the first $\log^2n$ variables only}.
Indeed, in this case, the overlaps between terms are so strong that \cite{KLW10}'s techniques become applicable, and each set $S \subseteq [\log^2n]$ of variables has many minimal covers by terms.
In fact, we personally know researchers who have devoted significant amounts of time attempting to either prove Mansour's conjecture for this DNF or use it as a counterexample.
Therefore, we feel that this DNF is a natural next challenge to attack, and we feel optimistic that if someone manages to prove Mansour's conjecture for it, they would be very close to proving the general case.

\section*{Acknowledgments}

We thank the anonymous reviewers, whose comments have helped improve this paper.  Li-Yang is supported by NSF CAREER Award 1942123. 
\newpage 
\bibliography{most-influential}{}
\bibliographystyle{alpha}

\end{document}